\DeclareMathOperator*{\argmin}{arg\,min}
\newtheorem{theorem}{Theorem}
\newtheorem{proposition}{Proposition}
\newtheorem{corollary}{Corollary}
\newtheorem{observation}{Observation}
\newtheorem{example}{Example}
\newtheorem{conjecture}{Conjecture}
\title[Correctness of MP in the NNI neighborhood]{On the correctness of Maximum Parsimony for data with few substitutions in the NNI neighborhood of phylogenetic trees}
\author{Mareike Fischer}
\address{Institute of Mathematics and Computer Science, Greifswald University, Greifswald, Germany} \email{mareike.fischer@uni-greifswald.de, email@mareikefischer.de}
\begin{document}

\begin{abstract} 
Estimating phylogenetic trees, which depict the relationships between different species, from aligned sequence data (such as DNA, RNA, or proteins) is one of the main aims of evolutionary biology. However, tree reconstruction criteria like maximum parsimony do not necessarily lead to unique trees and in some cases even fail to recognize the \enquote{correct} tree (i.e., the tree on which the data was generated). On the other hand, a recent study has shown that for an alignment containing precisely those binary characters (sites) which require up to two substitutions on a given tree, this tree will be the unique maximum parsimony tree.
 
 It is the aim of the present paper to generalize this recent result in the following sense: We show that for a tree $T$ with $n$ leaves, as long as $k<\frac{n}{8}+\frac{11}{9}-\frac{1}{18}\sqrt{9\cdot \left(\frac{n}{4}\right)^2+16}$ (or, equivalently, $n>9 k-11+\sqrt{9k^2-22 k+17} $, which in particular holds for all $n\geq 12k$), the maximum parsimony tree for the alignment containing all binary characters which require (up to or precisely) $k$ substitutions on $T$ will be unique in the NNI neighborhood of $T$ and it will coincide with $T$, too. In other words, within the NNI neighborhood of $T$, $T$ is the unique most parsimonious tree for the said alignment. This partially answers a recently published conjecture affirmatively. Additionally, we show that for $n\geq 8$ and for $k$ being in the order of $\frac{n}{2}$, there is always a pair of phylogenetic trees $T$ and $T'$ which are NNI neighbors, but for which the alignment of characters requiring precisely $k$ substitutions each on $T$ in total requires fewer substitutions on $T'$.  
 \smallskip \newline
\noindent \textbf{Keywords.} phylogenetic tree, maximum parsimony, Buneman theorem
\smallskip \newline
\noindent \textbf{MSC identifiers.} 05C05, 05C90, 92B05
\end{abstract}

\maketitle

\section{Introduction}

One of the main aims of mathematical phylogenetics is the reconstruction of an evolutionary relationship tree, also often called a phylogenetic tree, for a given species set $X$ based on some given data. Typically, the data are provided in the form of aligned sequence data, like DNA, RNA, proteins or binary characteristics (e.g., the absence or presence of certain morphological traits in species). The columns of such alignments are often referred to as \emph{characters} or \emph{sites}. Mathematically speaking, a character is merely a function assigning each species under consideration (which are represented by the leaves of the tree) a state -- which may be a letter of the four-letter DNA alphabet or, as is often the case with morphological data, of a binary alphabet. An alignment is then just a finite sequence of such characters. 

 While there are various different tree reconstruction methods available \cite{Semple2003,Yang2006,Felsenstein2004}, methods based on the famous maximum parsimony (MP) principle are amongst the best-known. The underlying idea of this principle is simply to find the tree with the smallest number of required substitutions or state changes along the edges of the tree. The implicit assumption of the parsimony principle is thus closely related to the well-known principle of Occam's razor \cite{Ariew1976,Sober2015}, which seeks to find the simplest explanation: As evolutionary state changes (substitutions) on a species level are rare, MP's aim is to find a tree that minimizes the number of such changes. For a given character $f$ and a given tree $T$, the minimum number of substitutions needed to explain $f$ on $T$ is often referred to as \emph{parsimony score} of $f$ on $T$.
 
 While there is some criticism concerning MP \cite{Felsenstein1978}, methods based on this criterion often outperform other methods in settings in which the underlying evolutionary substitution model is non-homogeneous \cite{Kolaczkowski2004}. It is also often assumed that MP is able to identify the \enquote{correct} tree (i.e., the one that has generated the data) whenever the number of substitutions is relatively small \cite{Sourdis1988}. 
 
 This biological context has recently inspired several mathematical publications. In particular, the findings of \cite{pablo}, in which for a given tree $T$ the alignment $A_k(T)$ consisting of all binary characters of parsimony score $k$ on $T$ was analyzed, led to the conjecture that whenever $n\geq 4k+1$ (or, equivalently, whenever $k\leq \frac{n-1}{4}$), $T$ is the unique maximum parsimony tree for $A_k(T)$, i.e., MP will recover $T$ uniquely:

\begin{conjecture}[Conjecture 1 from \cite{Fischer2019}] \label{conj}
Let $T$ be a binary phylogenetic $X$-tree with $|X|=n$. Let $k<\frac{n}{4}$. Then, $T$ is the unique maximum parsimony tree for $A_k(T)$. \end{conjecture}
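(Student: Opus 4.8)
The plan is to establish global uniqueness by a local-improvement argument over the NNI graph, which is connected on the set of binary phylogenetic $X$-trees. Writing $\mathrm{PS}_{T'}(A_k(T)) := \sum_{f \in A_k(T)} \mathrm{ps}_{T'}(f)$ for the total parsimony score of the alignment on a tree $T'$, and noting that $\mathrm{PS}_T(A_k(T)) = k\cdot|A_k(T)|$ because every $f \in A_k(T)$ has score exactly $k$ on $T$, it suffices to prove the single assertion: for every binary $X$-tree $T' \neq T$ there is an NNI move transforming $T'$ into a tree $T''$ with $\mathrm{PS}_{T''}(A_k(T)) < \mathrm{PS}_{T'}(A_k(T))$. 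Indeed, a global minimiser exists by finiteness of tree space, and this assertion shows that no $T' \neq T$ can be a minimiser; hence the unique minimiser, and therefore the unique maximum parsimony tree, is $T$. This reformulation is exactly what is needed to upgrade a purely local analysis to the global conjecture.

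To produce the improving move at a given $T' \neq T$, I would exploit that, by Buneman's theorem, $T'$ and $T$ have different split systems, so $T'$ contains an internal edge $e'$ whose induced split does not belong to $\Sigma(T)$. An NNI move at $e'$ rearranges the four subtrees $A,B,C,D$ meeting $e'$, and since such a move is supported only on the two nodes incident to $e'$, it changes the parsimony score of any single character by at most a small constant, and only for those characters whose optimal labelings interact with the rearranged region. Writing the difference $\mathrm{PS}_{T''}(A_k(T)) - \mathrm{PS}_{T'}(A_k(T))$ as a signed sum over these finitely many affected characters, the task becomes purely combinatorial: among the two NNI moves available at $e'$, choose one that strictly increases the number of splits shared with $T$ (such a move exists whenever $T' \neq T$), and show that the characters whose score drops outweigh those whose score rises. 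Both counts are governed by how the characters of $A_k(T)$ distribute their $k$ substitutions across $A,B,C,D$ together with their state pattern at the four attachment points, which I would enumerate by classifying each $f$ according to its restricted parsimony scores on the four subtrees.

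Two sharpenings are needed to reach the full range $k < \frac{n}{4}$. First, even for moves incident to $T$ itself, the estimate must be tightened beyond the regime treated in the local analysis: the loss comes from an overly generous bound on the number of worsening characters, and I would replace it by an explicit injection sending each worsening character to a distinct improving character that genuinely uses the edge being corrected, leaving a strictly positive surplus of improvers precisely when $k < \frac{n}{4}$. Second, and this is the main obstacle, the count must be carried out at an \emph{arbitrary} $T' \neq T$ rather than only in the immediate NNI neighbourhood of $T$. When the move is far from $T$, the characters of $A_k(T)$ no longer restrict to the subtrees $A,B,C,D$ according to the convenient split structure of $T$, so the joint distribution of their restricted scores is much harder to control; the crux is to show that the surplus of improvers over worseners remains positive uniformly over all $T'$ and all admissible correcting edges, and I expect essentially all of the difficulty of the full conjecture to sit here.

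As an alternative that could bypass the path argument entirely, I would attempt to rewrite $\mathrm{PS}_{T'}(A_k(T))$ as a sum of nonnegative weights attached to the splits in $\Sigma(T')$ plus a tree-independent constant. Minimising the total score would then reduce to a maximum-weight compatible-split-system problem, to which a Buneman-type uniqueness statement applies once one verifies that the splits of $T$ carry strictly the largest admissible weight. Establishing such a split-additive formula for parsimony over the particular family $A_k(T)$, or a sufficiently tight one-sided version of it, is the key technical hurdle of this route; if it can be made to work in the range $k < \frac{n}{4}$, it would yield the conjecture directly without any local-to-global passage.
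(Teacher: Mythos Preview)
The statement you are attempting to prove is \emph{not} proved in the paper: it is labelled and treated as a conjecture (Conjecture~\ref{conj}, taken from \cite{Fischer2019}), and the paper explicitly says it remains open in general. The paper's contribution is a \emph{partial} result in this direction, namely Theorem~\ref{thm:NNImain}, which establishes uniqueness of $T$ only within its NNI neighbourhood and only under the stronger hypothesis $k<\frac{n}{8}+\frac{11}{9}-\frac{1}{18}\sqrt{9(n/4)^2+16}$ (roughly $n\geq 12k$, not $n>4k$). So there is no ``paper's own proof'' of the full conjecture to compare your proposal against.

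What you have written is a research outline, not a proof, and you yourself identify the two places where it is incomplete. First, even in the NNI neighbourhood of $T$, you have not actually constructed the injection from worsening to improving characters that would push the bound down to $k<\frac{n}{4}$; the paper's proof of Theorem~\ref{thm:NNImain} does exactly this local counting via Proposition~\ref{prop:deltadiff} and Equations~\eqref{eqc1}--\eqref{eqc2}, and the bound it achieves (essentially $n\geq 12k$) reflects the crude termwise comparison $c_2^{(k_1,k_2,k_3,k_4)}>c_1^{(k_1,k_2,k_3,k_4)}$. Tightening this to $n>4k$ would already require a genuinely new idea beyond what the paper does. Second, your local-to-global step---finding, at an arbitrary $T'\neq T$, an NNI move that strictly decreases $l(A_k(T),\cdot)$---is the heart of the conjecture, and you correctly flag that the combinatorics of $A_k(T)$ relative to the subtree decomposition of a \emph{distant} $T'$ is uncontrolled. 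Nothing in the paper addresses this; the Discussion section explicitly leaves it as future work. Your alternative split-additive approach is likewise only a hope: no such formula is known for $l(A_k(T),T')$, and establishing one would itself resolve the conjecture.

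In short: the paper does not prove Conjecture~\ref{conj}, your proposal does not prove it either, and the gaps you name are precisely the open problems the paper highlights.
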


Conjecture \ref{conj} was published in \cite{Fischer2019} and partially proven there (namely for the cases $k=1$ and $k=2$, where the case $k=1$ is a direct consequence of Buneman's famous theorem \cite{Buneman1971,Semple2003} in mathematical phylogenetics). Moreover, already in \cite{pablo} as well as in \cite{Fischer2019}, an example of a tree $T$ with $n=8$ leaves and $k=2$ can be found for which another tree has a smaller (and thus \enquote{better}) parsimony score than $T$, and this tree is a so-called NNI neighbor of $T$. While NNI will be explained more in-depth in the following section, the idea here is simply that its NNI neighbors differ only slightly from $T$, cf. Example \ref{ex:8taxa}.

\begin{example}\label{ex:8taxa} Figure \ref{fig:8taxaBAD} shows a tree $T$ and one of its so-called NNI neighbors  $T'$, which differs from $T$ only concerning the fact that while $T$ contains edge $e$, which separates leaves 1,2,3 and 4 from 5,6, 7 and 8, $T'$ contains edge $e'$, which separates 1,2,7,8 from 3,4,5,6. It was mentioned in \cite{pablo} and further analyzed in \cite{Fischer2019} that $T'$ has a strictly smaller parsimony score for $A_2(T)$ than $T$.  
\end{example}

\begin{figure}
\includegraphics[scale=0.8]{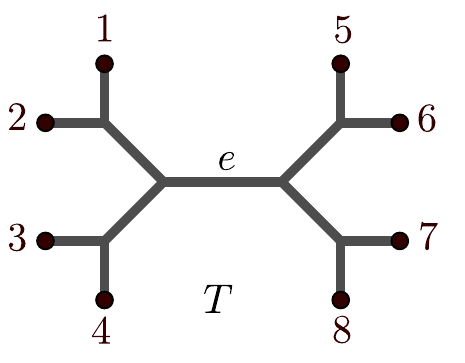} \hspace{0.5cm}
\includegraphics[scale=0.8]{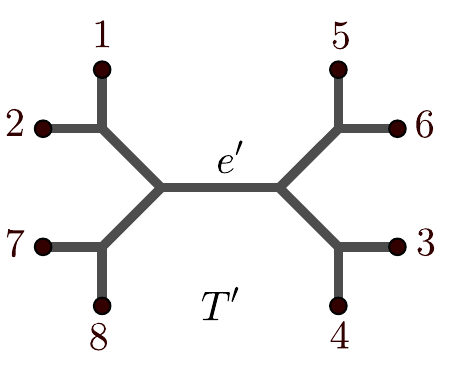}
\caption{Two trees, $T$ and $T'$, which are NNI neighbors (which can be seen by swapping the subtree with leaves 3 and 4 with the one with leaves 7 and 8 around edge $e$ in $T$) and for which the parsimony score of $A_2(T)$ is smaller on $T'$ than on $T$. }
\label{fig:8taxaBAD}
\end{figure}

 Mathematically, it is a natural question to ask if the results from \cite{Fischer2019} concerning the uniqueness of the maximum parsimony tree for $k=2$ can be generalized to $k>2$. It is the main aim of the present paper to show that, at least within the NNI neighborhood of a tree, as long as $k<\frac{n}{8}+\frac{11}{9}-\frac{1}{18}\sqrt{9\cdot \left(\frac{n}{4}\right)^2+16}$, which in particular holds if $n\geq 12k$, this generalization indeed holds. This partially answers Conjecture \ref{conj} affirmatively.

 Moreover, we will re-visit Example \ref{ex:8taxa}, i.e., the setting of a tree that is not a maximum parsimony tree for its own $A_k$-alignment as one of its NNI neighbors has a smaller parsimony score. We will show that this example can be extended in the sense that such examples exist for all  $n\geq 8$ and for $k$ in the magnitude of $\frac{n}{2}$ (for exact values of $k$, see Table \ref{tab:badcases}). We conclude the present paper by discussing our results and highlightling some interesting paths for future research.

\section{Preliminaries}
\subsection{Definitions and basic concepts}

\subsubsection{Phylogenetic trees}
We start with some notation. Recall that a phylogenetic tree $T=(V,E)$ on a species set $X=\{1, \ldots, n\}$ is a connected acyclic graph with vertex set $V$ and edge set $E$ whose leaves are bijectively labeled by $X$. Note that due to this bijective labelling, in a slight abuse of notation, it is common to identify $X$ with $V^1:=\{v \in V: deg_T(v)\leq 1\}$, i.e., with the leaves of $T$, and we will follow this convention in the present paper. Such a tree $T$ is also often referred to as phylogenetic $X$-tree. It is called \emph{rooted} if it contains one designated root node $\rho$ and \emph{unrooted} otherwise. Moreover, it is called \emph{binary} if all its inner nodes have degree 3, except in the rooted case, in which the root must have degree 2 (except in the special case of $|X|=1$ in which the root is at the same time the only leaf of the tree). Note that we consider two phylogenetic $X$-trees $T=(V,E)$ and $T'=(V',E')$ to be isomorphic, denoted $T\simeq T'$, if there exists a map $f:V\rightarrow V'$  such that $e=\{u,v\} \in E \Longleftrightarrow \{f(u),f(v)\} \in E'$ and with the additional property that $f(x)=x$ for all $x \in X$. In the rooted case, we also must have $f(\rho)=\rho'$, where $\rho$ and $\rho'$ denote the roots of $T$ and $T'$, respectively. In other words, $f$ is a graph isomorphism which preserves the leaf labelling, and -- if applicable -- the root position.

Throughout this paper,  whenever we refer to a tree $T$, unless stated otherwise, we always mean an unrooted binary phylogenetic $X$-tree. Whenever we mean a rooted tree, we explicitly state this. However, note that sometimes it is useful to root an unrooted tree by adding an extra root node. In particular, for any (unrooted) binary phylogenetic $X$-tree $T$ we call the rooted binary tree $T_e$ a \emph{rooted version of $T$ with respect to $e$} if $e$ is an edge of $T$ and if $T_e$ can be derived from $T$ by subdividing edge $e$ by adding a new degree-2 root vertex $\rho_e$ to $e$. A depiction of this procedure can be found in Figure \ref{fig_rootingofunrootedtree}. 

\begin{figure}
\includegraphics[scale=0.75]{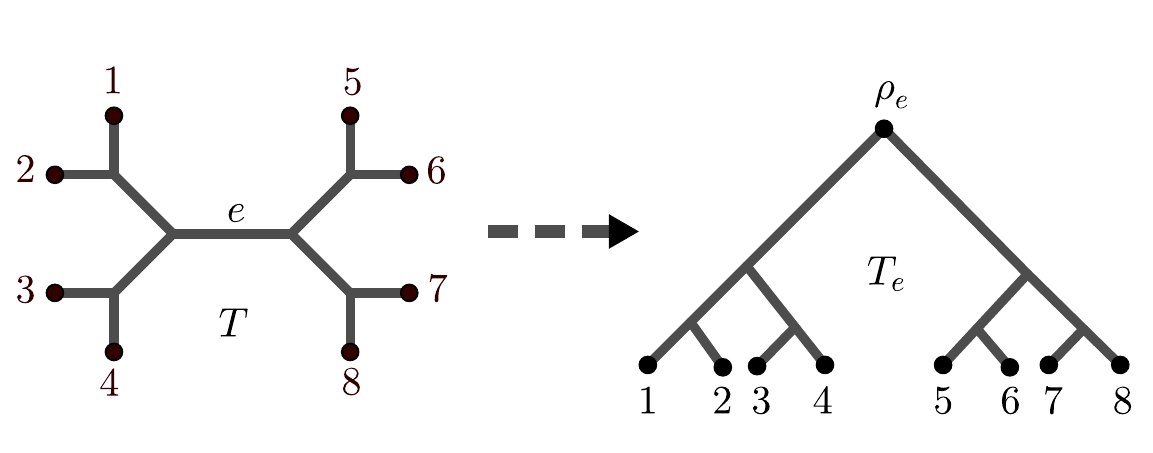} \hspace{0.5cm}
\caption{Subdividing an edge $e$ of a phylogenetic tree $T$ by adding a new degree-2 vertex $\rho_e$ turns $T$ into a rooted phylogenetic tree $T_e$ with root $\rho_e$. }
\label{fig_rootingofunrootedtree}
\end{figure}

Recall that a rooted tree comes with an inherent hierarchy, which implies that the vertices adjacent to a vertex $v$ which are \emph{not} contained on the path from $v$ to the root $\rho$ of the given tree are often referred to as \emph{children} of $v$, and $v$ is their \emph{parent}.

\subsubsection{Characters and alignments}
Now that we have characterized the objects we seek to reconstruct, namely phylogenetic trees, we need to specify the kind of data used to do so. In order to do this, recall that a character $f$ is a function from the taxon set $X$ to a set $\mathcal{C}$ of character states, i.e., $f: X \rightarrow \mathcal{C}$. In this paper, we will only be concerned with \emph{binary characters} (and thus also \emph{binary alignments)}, i.e., without loss of generality $\mathcal{C}= \{a,b\}$. Thus, a binary character $f: X \rightarrow \{a,b\}$ assigns to each leaf of a tree a corresponding state. Note that a finite sequence of characters is also often referred to as \emph{alignment} in biology. While in most biological cases, the order of the characters in an alignment plays an important role, for our purpose it suffices to simply define an alignment as a multiset of characters. Moreover, sometimes two alignments $A$ and $B$ are \emph{concatenated} to form a new alignment $A.B$. This concatenated alignment in our context is simply the union of the two multisets $A$ and $B$.

The most important alignment we will consider, the so-called $A_k$-alignment $A_k(T)$ of a given tree $T$, will be defined at the end of Section \ref{sec_pars}.

\subsubsection{Maximum parsimony and the Fitch algorithm}\label{sec_pars} 
More details on the notions and concepts of this subsection, which play an important role in mathematical phylogenetics, can, for instance, be found in \cite{Semple2003}. In the previous two subsections, we have defined the data (in the form of characters and alignments) and the objects we seek to reconstruct (namely phylogenetic trees). It remains to specify an optimization criterion which we can use to achieve that. In order to do this, we first need to understand what an \emph{extension} of a character is: An \emph{extension} of a binary character $f$ on a phylogenetic $X$-tree $T=(V,E)$ with vertex set $V$ is a map $g: V \rightarrow \{a,b\}$ such that $g(x)=f(x)$ for all $x \in X$. Thus, while $f$ only assigns states to the leaves, $g$ assigns states to all inner vertices of $T$, but it agrees with $f$ on the leaves. Moreover, we call $ch(g) = \vert \{ \{u,v\} \in E, \, g(u) \neq g(v)\} \vert$ the \emph{changing number} or \emph{substitution number} of $g$ on $T$.

Now, the concept we need for tree reconstruction in the present paper is \emph{maximum parsimony (MP)}, which is based on the so-called \emph{parsimony score} $l(f,T)$ of a character $f$ on a tree $T$. Here, $l(f,T) = \min\limits_{g} ch(g,T)$, where the minimum runs over all extensions $g$ of $f$ on $T$. The parsimony score of an alignment $A=\{f_1,\ldots, f_m\}$ of characters is then defined as: $l(A,T)=\sum\limits_{i=1}^m l(f_i,T)$. Moreover, a \emph{maximum parsimony tree}, or $\emph{MP tree}$ for short, of an alignment $A$ is defined as $\argmin_T\{l(A,T)\}$, where the minimum runs over all phylogenetic $X$-trees $T$ defined on the same set of taxa as $A$. In other words, an MP tree of an alignment $A$ is a tree minimizing the parsimony score of $A$ amongst all phylogenetic trees with the taxa of $A$ as leaves.

Note that given a binary tree $T$ and a character $f$, the parsimony score can be efficiently calculated in linear time using the well-known \emph{Fitch algorithm} \cite{Fitch,Hartigan1973}.
This algorithm assigns a set of states to all inner vertices and minimizes the required number of changes. It is based on Fitch's parsimony operation which we explain now. Therefore, let $\mathcal{C}$ be a non-empty finite set of character states and let $A,B \subseteq \mathcal{C}$. Then, Fitch's parsimony operation $*$ is defined by:
$$A*B \coloneqq \begin{cases}
A \cap B, & \text{if } A \cap B \neq \emptyset, \\
A \cup B, & \text{otherwise.}
\end{cases}$$

Using this operation, the Fitch algorithm works as follows. Given a binary phylogenetic tree $T$ and a character $f$, if $T$ is not already rooted, the algorithm first adds a degree-2 root to one of the edges of $T$, cf. Figure \ref{fig_rootingofunrootedtree}. It then proceeds with the rooted tree,  starting with the leaves. In the initialization step, it assigns a set to each leaf containing precisely the state assigned to this leaf by $f$. For instance, if $f(1)=a$, leaf $1$ gets assigned set $S(f,T,1)=\{a\}$. The algorithm then proceeds as follows: In each step, it considers all vertices $v$ whose two children have already been assigned a set, say $A$ and $B$. Then, $v$ is assigned the set $S(f,T,v):=A*B$. This step is continued \enquote{upwards} (i.e., from the leaves to the root) along the tree until the root $\rho$ is assigned a set, which is denoted by $S(f,T,\rho)$. Note that throughout the present paper, when we want to describe the set $S$ assigned to a vertex $v$, by a slight abuse of notation, we often write $S(v)$ instead of $S(f,T,v)$ whenever there is no ambiguity. Ultimately, the parsimony score $l(f,T)$ then simply equals the number of times the Fitch operation had to use the union instead of the intersection \cite{Fitch}. An example for the Fitch algorithm is given by Figure \ref{fig_fitch}.

\begin{figure}
\includegraphics[scale=0.75]{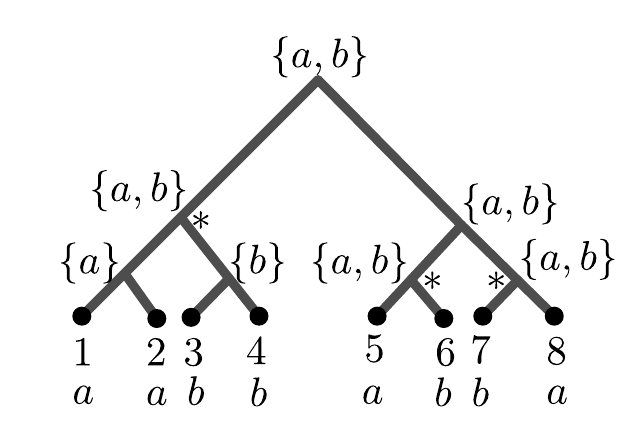} \hspace{0.5cm}
\caption{An example of the Fitch algorithm. Here, character $f$ with $f(1)=a$, $f(2)=a$, $f(3)=b$, $f(4)=b$, $f(5)=a$, $f(6)=b$, $f(7)=b$, $f(8)=a$ is mapped onto the leaves the rooted version $T_e$ of $T$ from Figure \ref{fig_rootingofunrootedtree}. The Fitch algorithm then starts at the leaves and considers their state assignment as sets. It proceeds with those inner vertices whose children have already been assigned a set and applies the Fitch operation. If a union has to be taken, the counter goes up. Note that not all $\{a,b\}$-sets are the result of a union. In the figure, union nodes are marked with an asterix. As there are three such nodes, we have $l(f,T_e)=l(f,T)=3$ and thus $f\in A_3(T)$. }
\label{fig_fitch}
\end{figure}

\par\vspace{0.5cm}
The final and possibly most important concept we wish to introduce in this section is the following: For a given tree $T$, we define $A_k(T)$ to be the set consisting of all binary characters $f$ with $l(f,T)=k$. Following \cite{Fischer2019}, we also refer to $A_k(T)$ as the \emph{alignment induced by $T$ and $k$}.

\subsubsection{Specific notions in the context of NNI}\label{sec:prelimNNI}

When considering a phylogenetic $X$-trees, it is often useful to consider small changes to the tree to see how this affects, for instance, the parsimony score of certain alignments. In this paper, we will therefore consider \emph{nearest neighbor interchange moves} or \emph{NNI moves} for short. An NNI move simply takes an inner edge $e$ of a binary phylogenetic $X$-tree $T$ (i.e., an edge that is \emph{not} incident to a leaf of $T$), and swaps two of the four subtrees of $T$ which we get when deleting the precisely four edges adjacent to $e$ in a way that the resulting tree is not isomorphic to $T$, i.e., in a way that changes the tree. An illustration of NNI moves can be found in Figure \ref{fig_nni}. A tree resulting from $T$ by performing one NNI move is called an \textit{NNI neighbor} of $T$, and all NNI neighbors of $T$ together with $T$ form the \textit{NNI neighborhood} of $T$. Note that this implies that we consider $T$ to belong to its own neighborhood even though $T$ is not its own neighbor.

\begin{figure}
\includegraphics[scale=0.3]{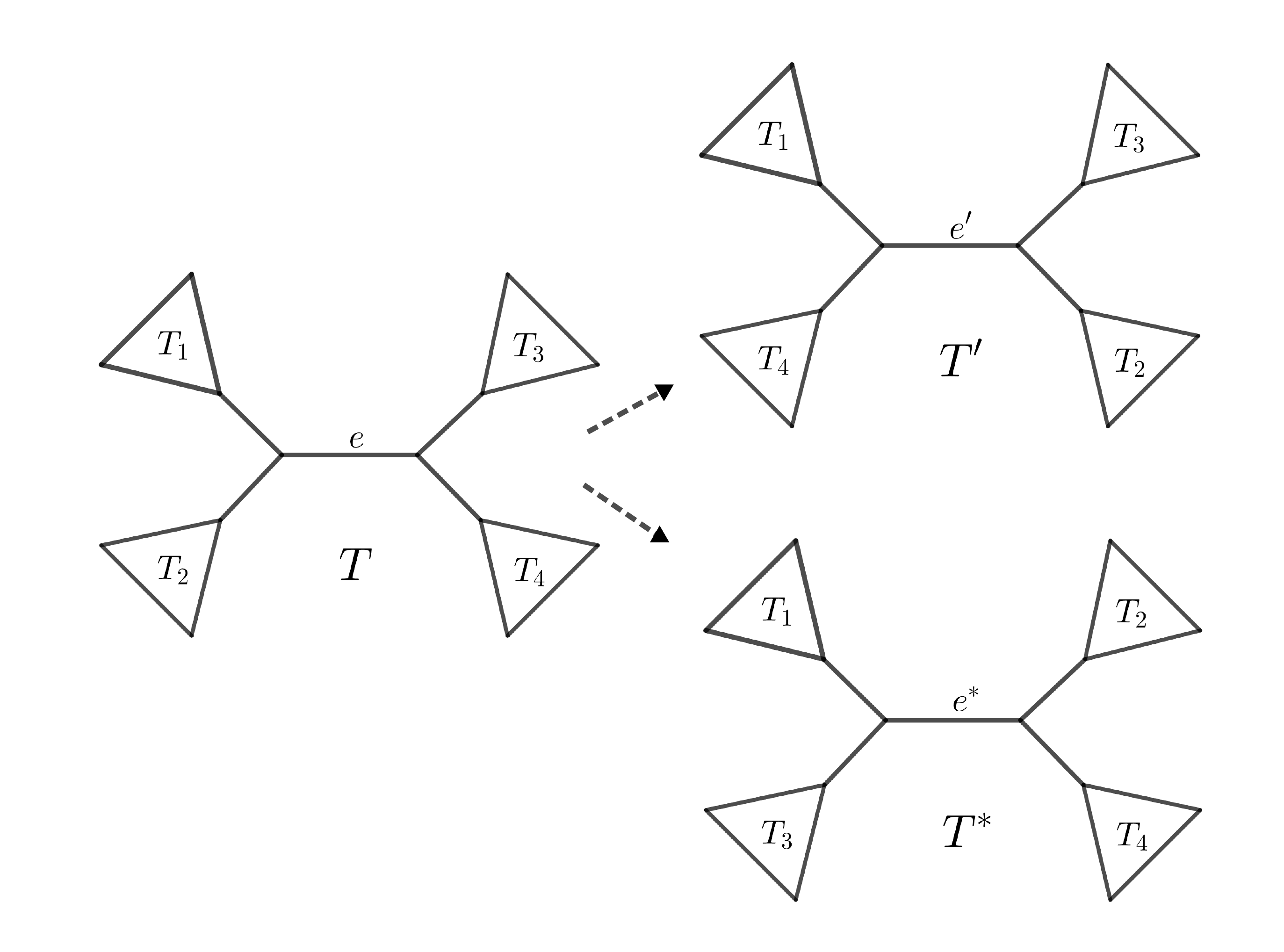} \hspace{0.5cm}
\caption{A tree $T$ and its two NNI neighbors concerning inner edge $e$: Tree $T'$ can be derived from $T$ by swapping subtrees $T_2$ and $T_4$, whereas $T^\ast$ can be derived by swapping subtrees $T_2$ and $T_3$. All subtrees are schematically depicted as triangles. }
\label{fig_nni}
\end{figure}

\par\vspace{0.5cm}
It is the main aim of the present paper to show that if $T'$ is an NNI neighbor of $T$, we have $l(A_k(T),T)<l(A_k(T),T')$ if $k$ is sufficiently small (relative to the leaf number $n$ of $T$). In order to prove this, we now define certain subtrees of $T$ and $T'$ and parameters $\delta$ and $\delta'$ which we will frequently refer to later on.

Let $T$ and $T'$ be phylogenetic $X$-trees such that $T$ and $T'$ are NNI neighbors. Let $e=\{u,v\}$ be the inner edge of $T$ around which the NNI move has to  be performed in order to turn $T$ into $T'$. Then, the removal of $e$ from $T$ disconnects $T$ into two rooted subtrees. In both of these subtrees the root has degree 2 (as $e$ was an inner edge), so we can further subdivide these subtrees. Without loss of generality, we denote by $T_1$ and $T_2$ the two subtrees resulting from the deletion of $u$ and its incident edges, and similarly, we denote by $T_3$ and $T_4$ the subtrees resulting from the deletion of $v$ and its incident edges (i.e., $T$ looks like in Figure \ref{fig_nni}). The root of each $T_i$ is denoted by $\rho_i$ for $i=1,\ldots,4$.

Now let $f$ be a binary character on $X$, and let $f_i$ denote the restriction of $f$ on $T_i$ for $i=1,\ldots,4$. Then, we define  function $\delta(f,T,e)$ as follows:

\begin{equation}\label{def_delta}
\delta(f,T,e):= l(f,T)-\sum\limits_{i=1}^4 l(f_i,T_i).
\end{equation}

Informally, $\delta(f,T,e)$ describes the number of changes of $f$ on $T$ which are not contained in any of the $T_i$ ($i=1,\ldots,4$) induced by $e$, but which are rather located \enquote{around $e$} itself. We will further investigate $\delta$ subsequently and also exploit its properties to derive our main result. However, as a short-hand, whenever there is no ambiguity concerning $f$, $e$ and $T$, we often use the short-hands  $\delta=\delta(f,T,e)$ and $\delta'=\delta(f,T',e')$, where $e'$ denotes the unique edge in $T'$ that is not contained in $T$ (i.e., the new edge resulting from the NNI move, cf. Figure \ref{fig_nni}).

\subsection{Known results}

It is the main aim of the present paper to generalize the following two results concerning $k\leq 2$ to the case $k\geq 3$, at least within an NNI neighborhood of the given tree $T$.  

The first result is the following theorem which is partially based on the famous \emph{splits equivalence theorem} by Buneman \cite{Buneman1971,Semple2003}. 

\begin{theorem}[adapted from Corollary 1 and Theorem 3 of \cite{Fischer2019}]\label{thm_k<=2} 
Let $T$ be a binary phylogenetic $X$-tree with $\vert X\vert =n$. Then, $T$ is the unique maximum parsimony tree for the alignment $A_1(T)$, i.e., we have $l(A_1(T),T)<l(A_1(T),T')$ for all binary phylogenetic $X$-trees $T'\neq T$.
If additionally $n\geq 9$, then $T$ is also the unique maximum parsimony tree of alignment $A_2(T)$, i.e., we have $l(A_2(T),T)<l(A_2(T),T')$ for all binary phylogenetic $X$-trees $T'\neq T$. 
\end{theorem}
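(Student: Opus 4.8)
The plan is to establish the two assertions by rather different means: the statement about $A_1(T)$ follows almost immediately from Buneman's splits equivalence theorem, while the statement about $A_2(T)$ requires a weighted, character-by-character comparison of parsimony scores that I would reduce to a combinatorial inequality relating the split systems (and some finer subtree structure) of $T$ and of a competing tree $T'$.

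\emph{The case $k=1$.} Recall that a binary character $f$ satisfies $l(f,T)=1$ exactly when the bipartition of $X$ it induces is a split of $T$ — either a trivial split isolating one leaf, or the split of one of the $n-3$ inner edges of $T$. Hence $A_1(T)$ consists precisely of the characters realising the splits of $T$, and since every such character has score $1$ on $T$ we get $l(A_1(T),T)=\vert A_1(T)\vert$. Now let $T'\neq T$ be any binary phylogenetic $X$-tree. Every character has parsimony score at least $1$ on $T'$, so $l(A_1(T),T')\ge \vert A_1(T)\vert$; and by Buneman's theorem \cite{Buneman1971,Semple2003}, since $T'\neq T$, the tree $T$ has at least one inner edge whose split is not a split of $T'$, so the character(s) realising that split lie in $A_1(T)$ and have score at least $2$ on $T'$. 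This already gives $l(A_1(T),T')\ge \vert A_1(T)\vert+1>l(A_1(T),T)$, which is the first assertion.

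\emph{The case $k=2$.} Fix $T'\neq T$ and partition $A_2(T)=B_1\cup B_2\cup B_{\ge3}$ (disjointly) according to whether a character $f\in A_2(T)$ has $l(f,T')=1$, $l(f,T')=2$, or $l(f,T')\ge3$. Using $l(f,T)=2$ for all $f\in A_2(T)$ and $l(f,T')\ge1$ in general, summing over $A_2(T)$ yields
\[
l(A_2(T),T')-l(A_2(T),T)\;\ge\;\vert B_{\ge3}\vert-\vert B_1\vert ,
\]
so it suffices to show $\vert B_{\ge3}\vert>\vert B_1\vert$ for every $T'\neq T$. The upper bound on $\vert B_1\vert$ is easy: a character in $B_1$ induces a bipartition that is a split of $T'$ but not of $T$, and all trivial splits are common to both trees, so $\vert B_1\vert\le 2d$, where $d$ is the number of inner edges of $T$ whose split is missing from $T'$ (equivalently, the number of inner edges of $T'$ whose split is missing from $T$).

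The crux is therefore a matching lower bound $\vert B_{\ge3}\vert\ge 2d+1$. Here I would first record a structural description of parsimony-score-$2$ characters of $T$ — up to swapping the two states, $f^{-1}(b)$ is the leaf set of the ``middle'' component obtained when two edges of $T$ are deleted, the simplest instance being one side of an inner split of $T$ with a single leaf moved across it — and then, for each of the $d$ inner edges $e$ of $T$ absent from $T'$, exhibit (for a suitable choice of the moved leaf) characters of this type that are forced to have parsimony score at least $3$, not merely $\ge2$, on $T'$, together with one additional such character obtained by a local modification near a region where $T$ and $T'$ genuinely disagree; in the NNI setting this region is described by the subtrees $T_1,\dots,T_4$ and the quantities $\delta,\delta'$ of \eqref{def_delta}. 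I expect this lower bound to be the main obstacle, and in particular its uniformity over all $T'$: when $T'$ is an NNI neighbour of $T$ one has $d=1$, essentially no slack is available, and one must argue purely locally that enough score-$2$ characters of $T$ are pushed up to score $\ge3$ on $T'$. This is precisely where the hypothesis $n\ge9$ is consumed — one needs enough leaves lying outside the manipulated subtrees to force the perturbed characters to be genuinely incompatible with $T'$ rather than compatible after a single further change — and the tree with $n=8$ and $k=2$ of Example~\ref{ex:8taxa} shows that $n\ge9$ cannot be relaxed. Once $\vert B_{\ge3}\vert>\vert B_1\vert$ has been secured in all cases, the displayed inequality gives $l(A_2(T),T')>l(A_2(T),T)$, completing the proof.
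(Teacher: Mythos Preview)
This theorem is not proved in the present paper: it sits in the ``Known results'' section and is quoted from \cite{Fischer2019}, so there is no in-paper proof to compare your attempt against. I can therefore only assess your argument on its own merits and against the brief remarks the paper makes about the original source.

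Your $k=1$ argument is complete and correct, and it is precisely the route the paper alludes to when it says the case $k=1$ ``is a direct consequence of Buneman's famous theorem''.

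Your $k=2$ argument, by contrast, is a plan rather than a proof. The reduction to $\vert B_{\ge 3}\vert>\vert B_1\vert$ and the upper bound $\vert B_1\vert\le 2d$ are sound, but the decisive step --- actually producing at least $2d+1$ characters in $B_{\ge 3}$ --- is only described programmatically (``I would first record \ldots\ and then \ldots\ exhibit \ldots''). You yourself flag this as ``the main obstacle'' and do not carry it out, even in the simplest case $d=1$. Moreover, your appeal to the $\delta,\delta'$ machinery of \eqref{def_delta} can at best treat NNI neighbours of $T$, whereas the theorem asserts uniqueness against \emph{all} binary $T'\neq T$; nothing in your sketch indicates how the local argument would propagate to trees far from $T$ (and indeed the paper's own main contribution, Theorem~\ref{thm:NNImain}, deliberately restricts to the NNI neighbourhood precisely because going beyond it is hard). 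As it stands, the $k=2$ case has its central combinatorial inequality unproved, so the proposal does not establish the second assertion.
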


The second result we will generalize to the case $k\geq 3$ within the NNI neighborhood of $T$ is the following corollary, which does not only consider alignments of characters with parsimony score precisely $k$, but instead those of characters with parsimony score up to $k$.

\begin{corollary}[Corollary 3 in \cite{Fischer2019}] \label{concat} Let $T$ be a binary phylogenetic $X$-tree with $|X|\geq 9$. Then, $T$ is the unique maximum parsimony tree for the alignments $A_0(T).A_1(T)$, $A_0(T).A_2(T)$, $A_1(T).A_2(T)$ and $A_0(T).A_1(T).A_2(T)$, where the $.$-symbol stands for the concatenation of alignments. 
\end{corollary}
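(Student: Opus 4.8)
The plan is to deduce all four claims directly from Theorem~\ref{thm_k<=2} by combining two elementary observations about the behaviour of the parsimony score under concatenation of alignments.

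First I would record that $A_0(T)$ does not actually depend on $T$ and is, from the point of view of maximum parsimony, invisible: a binary character has parsimony score $0$ on a tree if and only if it is constant, and the two constant $\{a,b\}$-characters on $X$ have parsimony score $0$ on \emph{every} phylogenetic $X$-tree. Hence $l(A_0(T),T')=0$ for all $T'$, and consequently $l(A_0(T).B,\,T')=l(B,T')$ for every alignment $B$ and every tree $T'$. In particular the maximum parsimony trees of $A_0(T).B$ are exactly those of $B$. This reduces the four statements to just three cases: it remains to show that $T$ is the unique maximum parsimony tree for $A_1(T)$, for $A_2(T)$, and for $A_1(T).A_2(T)$ --- the first two being given verbatim by Theorem~\ref{thm_k<=2} (the second requiring $n\geq 9$), and the concatenated alignments $A_0(T).A_1(T)$, $A_0(T).A_2(T)$ and $A_0(T).A_1(T).A_2(T)$ then inheriting uniqueness from $A_1(T)$, $A_2(T)$ and $A_1(T).A_2(T)$, respectively.

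Next I would use the defining additivity $l(A.B,T')=l(A,T')+l(B,T')$ together with the following superposition principle: if the two functions $T'\mapsto l(A,T')$ and $T'\mapsto l(B,T')$ on the set of binary phylogenetic $X$-trees are each uniquely minimized at one and the same tree $T$, then their sum is uniquely minimized at $T$ as well, because for any $T'\not\simeq T$ at least one of the two summands is strictly larger at $T'$ than at $T$ while the other is at least as large. Applying this with $A=A_1(T)$ and $B=A_2(T)$: Theorem~\ref{thm_k<=2} tells us that $T$ is the unique maximum parsimony tree of $A_1(T)$ (for every $n$) and, as soon as $n\geq 9$, also the unique maximum parsimony tree of $A_2(T)$; hence for $n\geq 9$ the tree $T$ is the unique maximum parsimony tree of $A_1(T).A_2(T)$. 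Together with the first observation, this proves the corollary for all four alignments.

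Since the statement is essentially a bookkeeping consequence of Theorem~\ref{thm_k<=2}, there is no real obstacle; the only points one must be careful about are (i) that the uniqueness statements for the two summands must single out the \emph{same} tree $T$ --- which they do, each singling out $T$ itself --- and (ii) that the hypothesis $n\geq 9$ is precisely what is needed to invoke the $A_2(T)$ part of Theorem~\ref{thm_k<=2} (for $A_0(T).A_1(T)$ alone this hypothesis would not even be required, but it is retained in the statement for uniformity).
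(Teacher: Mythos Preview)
The paper does not actually prove this corollary; it is quoted verbatim as a known result from \cite{Fischer2019} (Corollary~3 there), and no argument is given in the present paper. Your proof is correct and is precisely the elementary bookkeeping one expects: $A_0(T)$ consists of the two constant characters and hence contributes parsimony score $0$ on every tree, and additivity of $l(\,\cdot\,,T')$ over concatenation then reduces everything to Theorem~\ref{thm_k<=2}. Indeed, the paper later proves its own analogous corollary (for the NNI setting and general $k$) using exactly these two observations, so your approach matches the spirit of the paper's treatment of such concatenation statements.
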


A basic result that we need throughout this paper is the following theorem\footnote{Theorem \ref{thm_Fitchrootstate} has been adapted from its original version, most prominently by adding the final  bullet point. However, it can be easily seen that this is correct as each rooted binary tree with $n$ leaves has $n-1$ inner (i.e., non-leaf) vertices. Therefore, $n-1$ is a hypothetical upper bound for the number of union vertices during the Fitch algorithm and thus also for the parsimony score. This is why a tree with $n$ leaves cannot have any characters of parsimony score more than $n-1$.}, which for a rooted binary tree counts the number of binary characters that have parsimony score $k$ and additionally assign the root a certain set $M$ during the Fitch algorithm. 

\begin{theorem}[adapted from Theorem 1 in \cite{Steel1995}] \label{thm_Fitchrootstate} Let $T$ be a rooted binary phylogenetic $X$-tree with $|X|=n$. Let $k \in \mathbb{N}_{\geq 1}$ and let $M \in \{\{a\},\{b\},\{a,b\}\}$. Let $N_M(T,k)$ denote the number of binary characters $f:X\rightarrow \{a,b\}$ on $T$ which fulfill both of the following properties:  
\begin{itemize}
\item $l(f,T)=k$ and
\item $S(f,T,\rho)=M$.
\end{itemize}

Then, we have:
\begin{itemize}
\item If $M=\{a\}$ or $M=\{b\}$ and $k\leq n-1$, we have: $$N_M(T,k)= \binom{n-k-1}{k}\cdot 2^k.$$ 
\item If $M=\{a,b\}$ and $k\leq n-1$, we have: $$N_M(T,k)=\binom{n-k-1}{k-1}\cdot 2^k.$$
\item If $k> n-1$, we have $N_M(T,k)=0$.
\end{itemize}

\end{theorem}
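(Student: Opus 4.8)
The plan is to induct on the leaf number $n$, after disposing of the easy parts. The third bullet is immediate from the observation made in the footnote above: a rooted binary tree with $n$ leaves has exactly $n-1$ internal vertices, so the Fitch algorithm performs at most $n-1$ unions and $l(f,T)\le n-1$ for every character $f$; hence $N_M(T,k)=0$ whenever $k>n-1$. Next, swapping the states $a\leftrightarrow b$ is a parsimony-score-preserving bijection on characters that exchanges the root-sets $\{a\}$ and $\{b\}$, so $N_{\{a\}}(T,k)=N_{\{b\}}(T,k)$ for every $T$ and $k$, and it suffices to handle $M=\{a\}$ and $M=\{a,b\}$. Finally, I would adopt the convention that $\binom{m}{j}=0$ unless $0\le j\le m$; with this convention the claimed formulas are also correct for $k=0$ (they read $1$, $1$, $0$ for $M=\{a\},\{b\},\{a,b\}$, which is exactly the count of the two constant characters), and it is this uniform version for all $k\ge 0$ that I would actually prove, since the $k=0$ terms are needed as input to the induction.

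It is convenient to encode the statement as a polynomial identity. Writing $\mathcal{P}_T(x):=\sum_{k\ge 0}N_{\{a\}}(T,k)\,x^k$ and $\mathcal{Q}_T(x):=\sum_{k\ge 0}N_{\{a,b\}}(T,k)\,x^k$, the claim becomes $\mathcal{P}_T(x)=F_n(2x)$ and $\mathcal{Q}_T(x)=2x\,F_{n-1}(2x)$, where $F_m$ is the Fibonacci-type polynomial given by $F_0=0$, $F_1=1$, $F_m(y)=F_{m-1}(y)+y\,F_{m-2}(y)$, whose $y^k$-coefficient equals $\binom{m-1-k}{k}$. The base case $n=1$ is the single-vertex tree, where only the two constant characters exist, so $\mathcal{P}_T=1=F_1(2x)$ and $\mathcal{Q}_T=0=2x\,F_0(2x)$. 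For $n\ge 2$ I would split off the root $\rho$: its two pendant subtrees $T_L,T_R$ have $n_L,n_R\ge 1$ leaves with $n_L+n_R=n$; a character $f$ on $X$ is precisely a pair $(f_L,f_R)$ of restrictions, and the Fitch recursion gives $S(f,T,\rho)=S_L*S_R$ together with $l(f,T)=l(f_L,T_L)+l(f_R,T_R)+\varepsilon$, where $S_L,S_R$ are the two subtree root-sets and $\varepsilon=1$ exactly when $\{S_L,S_R\}=\{\{a\},\{b\}\}$ (and $\varepsilon=0$ otherwise). Running through the nine possibilities for $(S_L,S_R)$ and recording which give root-set $\{a\}$, respectively $\{a,b\}$, and whether $\varepsilon=1$, one obtains
\begin{align*}
\mathcal{P}_T &= \mathcal{P}_{T_L}\mathcal{P}_{T_R}+\mathcal{P}_{T_L}\mathcal{Q}_{T_R}+\mathcal{Q}_{T_L}\mathcal{P}_{T_R},\\
\mathcal{Q}_T &= \mathcal{Q}_{T_L}\mathcal{Q}_{T_R}+2x\,\mathcal{P}_{T_L}\mathcal{P}_{T_R}
\end{align*}
(where I have already used $N_{\{b\}}=N_{\{a\}}$ on the subtrees); in particular the right-hand sides depend only on $n_L$ and $n_R$, so shape-independence propagates automatically.

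Finally I would substitute the induction hypothesis $\mathcal{P}_{T_L}=F_{n_L}(2x)$, $\mathcal{Q}_{T_L}=2x\,F_{n_L-1}(2x)$ (and likewise for $T_R$), set $y:=2x$, and use $\mathcal{Q}=y\,F_{m-1}$ together with $F_{m+1}=F_m+y\,F_{m-1}$ to rewrite the first identity as $F_{n_L+n_R}=F_{n_L}F_{n_R+1}+y\,F_{n_L-1}F_{n_R}$ and the second as $F_{n_L+n_R-1}=F_{n_L}F_{n_R}+y\,F_{n_L-1}F_{n_R-1}$. Both are instances of the single addition formula $F_{p+q}=F_pF_{q+1}+y\,F_{p-1}F_q$, which I would prove by a short induction on $q$ from the defining recursion (or read off from the Binet-type closed form of $F_m$); this closes the induction. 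The bulk of the work is clerical: the exhaustive case analysis for $(S_L,S_R)$, and above all keeping the boundary conventions straight --- the degenerate one-leaf subtrees (where $\mathcal{P}=1$ and $\mathcal{Q}=0$, matching $F_1=1$ and $F_0=0$) and the negative-index behaviour of $\binom{\cdot}{\cdot}$ --- so that the polynomial identity genuinely re-encodes the three bulleted formulas. I expect this bookkeeping at the $F_0=0$ boundary (i.e.\ when $n_L=1$ or $n_R=1$) to be the fussiest point; the actual mathematical content reduces cleanly to one Fibonacci-polynomial identity.
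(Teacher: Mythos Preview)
The paper does not actually prove this theorem: it is quoted as a known result from \cite{Steel1995}, with only the third bullet point justified in a footnote (via the count of $n-1$ inner vertices, exactly as you do). So there is no in-paper proof to compare against.

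Your argument itself is correct. The case analysis on $(S_L,S_R)$ yields exactly the two recursions you state, the substitution $\mathcal{P}=F_m(2x)$, $\mathcal{Q}=2x\,F_{m-1}(2x)$ reduces both to the single addition law $F_{p+q}=F_pF_{q+1}+yF_{p-1}F_q$ for the Fibonacci polynomials, and that identity follows by a routine induction on $q$. The boundary cases $n_L=1$ or $n_R=1$ cause no trouble since $F_0=0$ makes the corresponding $\mathcal{Q}$-term vanish as it should. One small point worth stating explicitly when you write this up: the well-known coefficient formula $[y^k]F_m=\binom{m-1-k}{k}$ is itself proved by the same recursion (Pascal's rule), so you may as well verify it alongside the addition law rather than treat it as given. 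Apart from that, the generating-function packaging is a clean way to handle all $k$ simultaneously and to make the shape-independence of $N_M(T,k)$ transparent.
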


Equipped with all notions and known results stated above, we finally have the tools necessary to derive new results concerning the uniqueness of the MP tree $T$ of the $A_k(T)$-alignment within the NNI neighborhood of $T$.

\section{Results}

\subsection{Investigating the NNI neighborhood when \texorpdfstring{$k$}{k} is sufficiently small}
It is the main aim of this paper to show that within its NNI neighborhood and for a suitable choice of $k$, every tree $T$ gets uniquely recovered by MP from $A_k(T)$. In other words, we want to show that if $T'$ is an NNI neighbor of $T$, we have $l(A_k(T),T)<l(A_k(T),T')$ (if $k$ is sufficiently small). This result is formally stated by the following theorem.

\begin{theorem}\label{thm:NNImain}
Let $T$ and $T'$ be binary phylogenetic $X$-trees (with $\vert X \vert =n$) such that $T$ and $T'$ are NNI neighbors. Let $k\in \mathbb{N}$ such that $0<k<\frac{n}{8}+\frac{11}{9}-\frac{1}{18}\sqrt{9\cdot \left(\frac{n}{4}\right)^2+16}$. Then, we have: $l(A_k(T),T)<l(A_k(T),T')$. In other words, $T$ is the unique maximum parsimony tree of $A_k(T)$ within its NNI neighborhood.
\end{theorem}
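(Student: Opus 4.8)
The plan is to reduce the global comparison $l(A_k(T),T) < l(A_k(T),T')$ to a purely local statement about the single edge $e$ around which the NNI move is performed. The key observation is that $T$ and $T'$ differ only in the arrangement of the four subtrees $T_1,T_2,T_3,T_4$ around $e$ (respectively $e'$ in $T'$), so for any character $f$ the restrictions $f_1,\dots,f_4$ to these subtrees are the same in both trees, and hence $\sum_{i=1}^4 l(f_i,T_i)$ is identical. By the definition of $\delta$ in equation \eqref{def_delta}, this gives $l(f,T) - l(f,T') = \delta(f,T,e) - \delta(f,T',e')$. Summing over all $f \in A_k(T)$, it therefore suffices to show $\sum_{f\in A_k(T)} \delta(f,T,e) < \sum_{f\in A_k(T)} \delta(f,T',e')$. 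The first step is thus to establish that $\delta$ takes only a small number of possible values — I expect $\delta \in \{0,1\}$ (or at most $\{0,1,2\}$) — depending only on the Fitch root-set states $S(f_i,T_i,\rho_i)$ at the four subtree roots, and to work out exactly which combinations of these four root-sets force $\delta = 1$ versus $\delta = 0$, and how this combinatorial pattern changes when the subtrees are re-paired as in $T'$.

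Once $\delta$ is understood as a function of the quadruple $(S(f_1),S(f_2),S(f_3),S(f_4)) \in \{\{a\},\{b\},\{a,b\}\}^4$ together with the individual scores $k_i := l(f_i,T_i)$ (which must satisfy $\sum k_i + \delta = k$), the second step is a counting argument. For each admissible tuple of root-states and scores, Theorem \ref{thm_Fitchrootstate} gives the exact number of characters $f_i$ on $T_i$ realizing it, namely $\binom{n_i - k_i - 1}{k_i}2^{k_i}$ or $\binom{n_i - k_i - 1}{k_i - 1}2^{k_i}$, where $n_i = |X \cap T_i|$. Multiplying these counts over $i = 1,\dots,4$ and summing over all tuples compatible with a given $\delta$-value yields closed-form (if messy) expressions for $\sum_{f} \delta(f,T,e)$ and $\sum_{f}\delta(f,T',e')$. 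The difference between the two is then a sum of binomial-coefficient products indexed by the $n_i$ and the $k_i$, and the goal is to show this difference is strictly positive under the hypothesis on $k$.

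The main obstacle, as I see it, will be controlling this binomial difference uniformly over all ways of splitting $n = n_1 + n_2 + n_3 + n_4$ and $k = k_1 + k_2 + k_3 + k_4 + \delta$. The inequality must hold for the worst-case distribution of leaves among the four subtrees (some $n_i$ could be as small as $1$, forcing the corresponding $T_i$ to be a single leaf with trivial Fitch set), and the bound $k < \tfrac{n}{8} + \tfrac{11}{9} - \tfrac{1}{18}\sqrt{9(n/4)^2 + 16}$ presumably emerges precisely as the threshold below which a dominant "good" term in $\sum_f \delta(f,T,e)$ (arising from the $\delta=1$ patterns native to the pairing $\{T_1,T_2\}$ vs. $\{T_3,T_4\}$) outweighs all the competing terms from $T'$. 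I would isolate the leading-order behavior of the binomials, bound the lower-order terms crudely, and reduce everything to a one-variable polynomial inequality in $k$ (with $n$ as a parameter); solving that quadratic-type inequality for the break-even point should reproduce the stated bound, and one then checks separately the clean sufficient condition $n \ge 12k$. A subtlety worth flagging is that the problem is genuinely symmetric only up to the pairing structure, so the case analysis of which root-state quadruples give $\delta=1$ on $T$ but $\delta=0$ on $T'$ (and vice versa) needs to be done carefully and is where an error would most easily creep in; I would organize it by the multiset $\{S(f_1),S(f_2),S(f_3),S(f_4)\}$ and the position of any $\{a,b\}$ entries, since those are the ones whose contribution to $\delta$ depends on how they are paired.
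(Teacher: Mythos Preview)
Your high-level strategy matches the paper's: reduce to the local quantity $\delta$, observe that only characters with $\delta(f,T,e)\neq\delta(f,T',e')$ contribute to the difference, and count these using Theorem~\ref{thm_Fitchrootstate}. However, you are missing the structural lemma that makes the counting tractable, and your guess about which $\delta$-values matter is off.

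The paper first proves (Proposition~\ref{prop:deltadiff}) that $\delta,\delta'\in\{0,1,2\}$, and more importantly that whenever $\delta\neq\delta'$ one must have $\{\delta,\delta'\}=\{1,2\}$, \emph{not} $\{0,1\}$ as you anticipate. Furthermore, in every such case all four root sets $S(\rho_i)$ are singletons: two equal $\{a\}$ and two equal $\{b\}$, in one of exactly two patterns depending on whether $\delta=2$ or $\delta=1$. So the $\{a,b\}$ root states you plan to track contribute nothing to the difference, and your proposed case analysis over ``the position of any $\{a,b\}$ entries'' is unnecessary. Without this simplification the bookkeeping is substantially heavier, and it is not clear you would arrive at the clean summand-by-summand comparison the paper uses.

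Concretely, the paper writes the difference as $c_1-c_2$, where $c_1$ counts characters with $\delta=2$ (pattern $S(\rho_1)=S(\rho_4)\neq S(\rho_2)=S(\rho_3)$, summed over $k_1+k_2+k_3+k_4=k-2$) and $c_2$ counts those with $\delta=1$ (pattern $S(\rho_1)=S(\rho_2)\neq S(\rho_3)=S(\rho_4)$, summed over $k_1+k_2+k_3+k_4=k-1$). Rather than your proposed ``isolate leading-order behavior and bound lower-order terms crudely,'' the paper compares each summand of $c_1$ against a single matching summand of $c_2$ obtained by incrementing $k_j$ by one for the index $j$ with $n_j\geq\lceil n/4\rceil$ (which exists by pigeonhole). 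The ratio reduces to $\tfrac{2(n_j-2k_j-2)(n_j-2k_j-1)}{(n_j-k_j-1)(k_j+1)}$, and the stated bound on $k$ is precisely what forces this ratio to exceed $1$. This targeted comparison is the step that yields the explicit threshold; a cruder asymptotic approach would likely give a weaker constant.
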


Before we prove this theorem, we first state the following observation.

\begin{observation}\label{obs:4k+2}
The upper bound for $k$ given by Theorem \ref{thm:NNImain} is somewhat complicated. However, it can be easily shown (e.g., using a computer algebra system like Mathematica \cite{Mathematica}), that $k<\frac{n}{8}+\frac{11}{9}-\frac{1}{18}\sqrt{9\cdot \left(\frac{n}{4}\right)^2+16}$ translates to $n>9 k-11+ \sqrt{9 k^2-22 k+17} $. For all non-negative values of $k$, the latter term is strictly smaller than $12k$. Thus, the theorem in particular shows that if $n \geq 12k$, the MP tree of $A_k(T)$ is unique within the NNI neighborhood of $T$. 
\end{observation}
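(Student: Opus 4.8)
The statement is a purely elementary algebraic fact with three parts: (i) the equivalence of the bound $k<\frac{n}{8}+\frac{11}{9}-\frac{1}{18}\sqrt{9\cdot(n/4)^2+16}$ with $n>9k-11+\sqrt{9k^2-22k+17}$; (ii) the estimate $9k-11+\sqrt{9k^2-22k+17}<12k$ for all $k\ge 0$; and (iii) the conclusion that $n\ge 12k$ suffices to invoke Theorem \ref{thm:NNImain}. The plan is to treat these in turn, the whole argument being a careful manipulation of a single square-root inequality.

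For part (i), the idea is to isolate the radical. Writing $R:=\sqrt{9\cdot(n/4)^2+16}=\sqrt{9n^2/16+16}$, the hypothesis $k<\frac{n}{8}+\frac{11}{9}-\frac{1}{18}R$ becomes, after multiplying by $18$, the inequality $R<\frac{9n}{4}+22-18k=:Q$. Since $R>0$, this forces $Q>0$, and only then may both sides be squared. Squaring and collecting terms reduces, after clearing denominators, to the quadratic inequality in $n$
\begin{equation*}
n^2+(22-18k)\,n+\bigl(72k^2-176k+104\bigr)>0.
\end{equation*}
Its discriminant is $36k^2-88k+68=4\,(9k^2-22k+17)$, so the two roots are exactly $n_\pm=9k-11\pm\sqrt{9k^2-22k+17}$. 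As the parabola opens upward, the inequality holds iff $n<n_-$ or $n>n_+$. The remaining task is to discard the branch $n<n_-$: this is precisely where the sign condition $Q>0$ is used. Since $Q$ is increasing in $n$ and vanishes at $n=8k-\frac{88}{9}$, one checks that $n_-<8k-\frac{88}{9}$, so $Q<0$ on the whole left branch, contradicting the positivity of $Q$ that the original inequality requires. This leaves $n>n_+=9k-11+\sqrt{9k^2-22k+17}$, giving the claimed equivalence.

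For part (ii), I would simply observe that $\sqrt{9k^2-22k+17}<3k+11$ for every $k\ge 0$: both sides are nonnegative, and squaring gives $9k^2-22k+17<9k^2+66k+121$, i.e.\ $0<88k+104$, which is clear. Adding $9k-11$ to both sides yields $n_+<12k$. Along the way I would record that $9k^2-22k+17$ has negative discriminant $-128$ (and likewise $36k^2-88k+68$ has discriminant $-2048$), so both radicals are real for all $k$. Finally, for part (iii), if $n\ge 12k$ then $n\ge 12k>n_+$ by part (ii), hence $n>n_+$, and by the equivalence of part (i) the hypothesis of Theorem \ref{thm:NNImain} is satisfied; its conclusion then applies verbatim.

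The computations here are entirely routine, so the only genuine subtlety — and the step I would write out most carefully — is the squaring in part (i): one must establish $Q>0$ \emph{before} squaring and then eliminate the spurious root $n<n_-$ using exactly that positivity. Everything else is linear bookkeeping and a single further elementary squaring.
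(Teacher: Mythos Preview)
Your argument is correct. Note, however, that the paper does not actually prove this observation: it simply asserts the equivalence and the bound, deferring verification to a computer algebra system. So there is nothing to compare against---you have supplied a complete by-hand proof where the paper offers none.

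One small point worth making explicit in your write-up of part~(i): after eliminating the branch $n<n_-$ via the positivity of $Q$, you conclude the equivalence, but strictly speaking you also need the converse direction, namely that $n>n_+$ forces $Q>0$ (so that the squaring step is reversible on that branch). This amounts to checking $n_+\ge 8k-\tfrac{88}{9}$, i.e.\ $k+\sqrt{9k^2-22k+17}>\tfrac{11}{9}$, which follows by the same squaring trick you used for $n_-<8k-\tfrac{88}{9}$. With that line added, the biconditional is airtight.
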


Next, we need to derive a technical tool to analyze the relationship between $T$ and its NNI neighbors more in-depth using $\delta$ as defined in Section \ref{sec:prelimNNI}. The following proposition will be the main ingredient in the proof of Theorem \ref{thm:NNImain}.

\begin{proposition} \label{prop:deltadiff} Let $T$ be a binary phylogenetic $X$-tree (with $\vert X \vert =n$) with an inner edge $e=\{u,v\}$ such that $T$ is as sketched in Figure \ref{fig_nni}, i.e., $e$ splits $T_1$ and $T_2$ away from $T_3$ and $T_4$. We denote the root of $T_i$ with $\rho_i$ for $i=1,\ldots, 4$. Let $T'$ be the NNI neighbor of $T$ derived by swapping $T_2$ and $T_4$ around $e$ (which generates an edge $e'$ in $T'$ corresponding to $e$ in $T$, cf. Figure \ref{fig_nni}). Let $f$ be a binary character on $X$. Then, we have: If $l(f,T)\neq l(f,T')$, then either $\delta(f,T,e)=2$ and $\delta(f,T',e')=1$ or vice versa. Moreover, we must have that two of the sets $S(\rho_1)$, $S(\rho_2)$, $S(\rho_3)$, $S(\rho_4)$ are equal to $\{a\}$ and the other two are equal to $\{b\}$, and one of the following two cases must hold: either $S(\rho_1)=S(\rho_2)\neq S(\rho_3)=S(\rho_4)$ or $S(\rho_1)=S(\rho_4)\neq S(\rho_2)=S(\rho_3)$.
\end{proposition}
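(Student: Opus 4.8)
\textbf{Proof plan for Proposition \ref{prop:deltadiff}.}

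The plan is to analyze the Fitch algorithm on $T$ and $T'$ simultaneously by rooting both trees at the edge $e$ (resp.\ $e'$) and tracking the sets assigned to the four subtree roots $\rho_1,\ldots,\rho_4$. First I would record the local structure: in $T$, the vertex $u$ is the parent of $\rho_1,\rho_2$ and $v$ is the parent of $\rho_3,\rho_4$, so running Fitch gives $S(u)=S(\rho_1)*S(\rho_2)$ and $S(v)=S(\rho_3)*S(\rho_4)$, and the parsimony score contributed ``around $e$'' is exactly the number of union operations performed at $u$, at $v$, and at the root $\rho_e$ placed on $e$; this count is precisely $\delta=\delta(f,T,e)$, since the scores inside the $T_i$ are subtracted off in \eqref{def_delta}. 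The key first observation is that $\delta\in\{0,1,2,3\}$ (three potential union vertices) but in fact $\delta\le 2$: if a union is taken at both $u$ and $v$, then $S(u)=S(v)=\{a,b\}$, so their combination at $\rho_e$ is an intersection, not a union. The same analysis applies to $T'$ with $u',v',\rho_{e'}$ and the pairing $\{\rho_1,\rho_4\}$, $\{\rho_2,\rho_3\}$, giving $\delta'\le 2$ as well. Crucially, since $l(f_i,T_i)$ does not depend on whether we are in $T$ or $T'$, we have $l(f,T)-l(f,T')=\delta-\delta'$, so $l(f,T)\ne l(f,T')$ forces $\delta\ne\delta'$.

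Next I would enumerate the possible values of $(S(\rho_1),S(\rho_2),S(\rho_3),S(\rho_4))\in\{\{a\},\{b\},\{a,b\}\}^4$ and compute $\delta$ and $\delta'$ for each. The point is to show that whenever $\delta\ne\delta'$ the only surviving configurations are the ``two $\{a\}$'s and two $\{b\}$'s'' patterns described in the statement. Concretely: if any $S(\rho_i)=\{a,b\}$, I would argue that the multiset $\{S(\rho_1),\ldots,S(\rho_4)\}$ is the same whichever way we pair the four roots into two pairs, because $\{a,b\}*M=M$ for every $M$, so the number of unions at the two lower internal vertices, and hence the set fed to the top root, is pairing-independent; this yields $\delta=\delta'$, a contradiction. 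So all four sets are singletons. If all four are $\{a\}$ (or all $\{b\}$), then $\delta=\delta'=0$. If exactly three agree, say three $\{a\}$'s and one $\{b\}$, then regardless of the pairing one lower vertex sees $\{a\},\{a\}$ (intersection) and the other sees $\{a\},\{b\}$ (union, producing $\{a,b\}$), and the top then sees $\{a\},\{a,b\}$ (intersection): so $\delta=\delta'=1$, again a contradiction. The only remaining case is two $\{a\}$'s and two $\{b\}$'s, which splits into the two sub-cases in the statement according to whether the equal pair among $\rho_1,\ldots,\rho_4$ respects the $T$-pairing ($S(\rho_1)=S(\rho_2)\ne S(\rho_3)=S(\rho_4)$) or the $T'$-pairing ($S(\rho_1)=S(\rho_4)\ne S(\rho_2)=S(\rho_3)$), or neither ($S(\rho_1)=S(\rho_3)\ne S(\rho_2)=S(\rho_4)$). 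In the ``neither'' case, both $T$ and $T'$ pair an $\{a\}$ with a $\{b\}$ at each lower vertex, so both lower vertices are unions giving $\{a,b\}$, and the top is an intersection: $\delta=\delta'=2$, contradiction. In the two remaining sub-cases, say $S(\rho_1)=S(\rho_2)\ne S(\rho_3)=S(\rho_4)$: in $T$ both lower vertices are intersections and the top vertex $\rho_e$ sees $\{a\},\{b\}$, a union, so $\delta=1$; in $T'$ both lower vertices pair unlike singletons, so both are unions, and the top sees $\{a,b\},\{a,b\}$, an intersection, so $\delta'=2$. Hence $\{\delta,\delta'\}=\{1,2\}$, as claimed; the symmetric sub-case gives the reverse assignment.

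The main obstacle I anticipate is not any single computation but making rigorous the claim that the sets $S(\rho_1),\ldots,S(\rho_4)$ produced by Fitch on the subtrees $T_i$ are genuinely the \emph{same} objects in $T$ and in $T'$ — i.e.\ that rooting $T$ at $e$ and rooting $T'$ at $e'$ restricts to the identical rooted trees $T_1,\ldots,T_4$ with identical Fitch runs — and similarly that $\delta$ really equals the number of union operations among the at-most-three ``central'' vertices, independent of which of the (at most) three legal NNI edges we chose to root on. This requires a careful statement of how the Fitch algorithm behaves under rerooting; I would handle it by invoking the well-known fact that the Fitch parsimony score is independent of the rooting edge, and more specifically that the sets computed by Fitch at a vertex depend only on the subtree hanging below it, so that the subtree computations are untouched by the NNI move and \eqref{def_delta} isolates exactly the central union count. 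Once that bookkeeping is pinned down, the rest is the finite case check sketched above, which also transparently establishes the final structural dichotomy on $S(\rho_1),\ldots,S(\rho_4)$.
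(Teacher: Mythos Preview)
Your approach is essentially the paper's: both arguments root at $e$ (resp.\ $e'$), observe that $\delta,\delta'\in\{0,1,2\}$ because two $\{a,b\}$ children force an intersection at the root, note that $l(f,T)-l(f,T')=\delta-\delta'$, and then do a case analysis on the quadruple $(S(\rho_1),\ldots,S(\rho_4))$. The paper organizes the case analysis a bit differently (it first rules out $\delta=0$ by showing that $\delta=0$ forces all $S(\rho_i)\in\{\{a\},\{a,b\}\}$ or all $\in\{\{b\},\{a,b\}\}$, hence also $\delta'=0$; then it pins down the singleton structure directly from $\delta=2$), but the content is the same.

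One step in your write-up needs repair. In the case where some $S(\rho_i)=\{a,b\}$ you claim that ``the number of unions at the two lower internal vertices, and hence the set fed to the top root, is pairing-independent.'' This intermediate claim is false: take $(S(\rho_1),S(\rho_2),S(\rho_3),S(\rho_4))=(\{a,b\},\{a\},\{a\},\{b\})$. In $T$ the lower vertices produce $\{a\}$ (intersection) and $\{a,b\}$ (union), one union below; in $T'$ they produce $\{b\}$ (intersection) and $\{a\}$ (intersection), zero unions below. The sets fed to the top differ as well. Your \emph{conclusion} $\delta=\delta'$ is nevertheless correct here (both equal $1$), because the top vertex compensates; the clean way to justify it is exactly the rooting-independence of the Fitch score that you invoke later: once an $\{a,b\}$ is absorbed via $\{a,b\}*M=M$ with no union, what remains is a three-set Fitch merge whose total union count does not depend on the merge order. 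So replace the incorrect ``lower-level count is invariant'' claim with ``the total union count over all three central vertices is invariant'' and the argument goes through.
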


\begin{proof} Let $T$ and $T'$ be two NNI neighbors as described in the statement of the proposition. Let $f$ be a binary character such that $l(f,T)\neq l(f,T')$. Note that by Equation \eqref{def_delta}, this necessarily implies $\delta:=\delta(f,T,e)\neq \delta(f,T',e')=:\delta'$.

 We now first observe that $\delta$, $\delta' \in \{0,1,2\}$. This can be easily seen by analyzing the Fitch algorithm: We consider the rooted versions $T_e$ and $T'_{e'}$ of $T$ and $T'$, respectively. These trees each have three inner vertices that are not contained in any of the subtrees $T_1,\ldots,T_4$, namely the two endpoints of $e=\{u,v\}$ and $e'=\{u',v'\}$ in the unrooted versions of the trees as well as the newly introduced root $\rho_e$ or $\rho_{e'}$, respectively. This gives a theoretical upper bound of $\delta$, $\delta'\leq 3$. Now note that in the binary case, every union node gets assigned set $\{a,b\}$ by the Fitch algorithm, so in the hypothetical case that $\delta=3$ (or $\delta'=3$), we would require three union nodes, namely $u$, $v$ and $\rho_e$ (or $u'$, $v'$ and $\rho_{e'}$, respectively). However, if the two children of the root are assigned $\{a,b\}$, the root is \emph{not} a union but an intersection node (cf. Figure \ref{fig_fitch}). Thus, we indeed must have $\delta$, $\delta' \in \{0,1,2\}$. 

 We now assume that $\delta=0$ (or $\delta'=0$). In this case, the Fitch algorithm takes no unions on any of the vertices $u$, $v$ and $\rho_e$ (or $u'$, $v'$ and $\rho_{e'}$, respectively). It can be easily seen, however, that this implies that the roots of the subtrees $T_1,\ldots,T_4$ either must all be assigned sets from $\{\{a\},\{a,b\}\}$ or they must all be assigned sets from $\{\{b\},\{a,b\}\}$ -- otherwise, inevitably at some stage an $\{a\}$-set and a $\{b\}$-set would have the same parent and thus lead to a union. However, if we have $S(\rho_i)\in \{\{a\},\{a,b\}\}$ or  $S(\rho_i)\in \{\{b\},\{a,b\}\}$ for all $i=1,\ldots,4$ (where $\rho_i$ denotes the root of $T_i$ for $i=1,\ldots,4$), then actually \emph{both} $\delta=0$ and $\delta'=0$, as -- independent of the positions of $T_1,\ldots,T_4$ in the tree -- no union will ever be formed in this case. As this would contradict $\delta\neq \delta'$, we must have $\delta>0$ and $\delta'>0$. This completes the proof of the first assertion of the proposition: As $\delta\neq \delta'$ and $\delta$, $\delta' \in \{1,2\}$, we must have that one of these values equals 1 and the other one equals 2.

In order to prove the second assertion, assume $\delta=2$. This implies that the Fitch algorithm assigns unions to two out of the three nodes $u$, $v$ and $\rho_e$. As stated above, in the binary case, unions are always $\{a,b\}$-sets, but if the parent of an $\{a,b\}$-set also gets assigned $\{a,b\}$, it is always an intersection. Thus, the two union nodes must be $u$ and $v$, and we must have $S(u)=\{a,b\}$ and $S(v)=\{a,b\}$. However, in order for these assignments to be the result of unions, we necessarily must have $S(\rho_1)\neq S(\rho_2)$ as well as $S(\rho_3)\neq S(\rho_4)$, and all sets $S(\rho_i)$ must be contained in $\{\{a\},\{b\}\}$ for $i=1,\ldots,4$. This only leaves two  possibilities: We either have $S(\rho_1)=S(\rho_3)\neq S(\rho_2)=S(\rho_4)$ or we have $S(\rho_1)=S(\rho_4)\neq S(\rho_2)=S(\rho_3)$. However, if we -- additionally to $\delta=2$ -- assume that $\delta'=1$, we cannot have $S(\rho_2)=S(\rho_4)$, because this would lead to $\delta=\delta'$, a contradiction. Thus, we necessarily must have $S(\rho_1)=S(\rho_4)\neq S(\rho_2)=S(\rho_3)$ as required. Using an analogous argument in case we have $\delta'=2$ and $\delta=1$ (in which case we must have $S(\rho_1)=S(\rho_2)\neq S(\rho_3)=S(\rho_4)$) completes the proof.
\end{proof}

\par\vspace{0.5cm}

We are now finally in the position to prove Theorem \ref{thm:NNImain}.

 \begin{proof}[Proof of Theorem \ref{thm:NNImain}]
Let $k \in \mathbb{N}$ such that $0<k<\frac{n}{8}+\frac{11}{9}-\frac{1}{18}\sqrt{9\cdot \left(\frac{n}{4}\right)^2+16}$. If $k=1$, according to Theorem \ref{thm_k<=2}, there remains nothing to show. Thus, we may assume in the following that $k\geq 2$.\footnote{Note that if $k=2$, the assumption that $k<\frac{n}{8}+\frac{11}{9}-\frac{1}{18}\sqrt{9\cdot \left(\frac{n}{4}\right)^2+16}$ ensures that $n>10$, which shows that by Theorem \ref{thm_k<=2}, also in the case $k=2$ there remains nothing to show. But as we do not require this assumption in the present proof, it is sufficient to assume $k\geq 2$ rather than $k\geq 3$ for the remainder of the proof.}

Now, let $T$ be a binary phylogenetic $X$-tree with $|X|=n$ and let $e=\{u,v\}$ be an inner edge of $T$ inducing subtrees $T_1$ and $T_2$ adjacent to $u$ and $T_3$ and $T_4$ adjacent to $v$ when $e$ is removed. In the following, we denote the roots of $T_i$ by $\rho_i$, respectively.

Without loss of generality, let $T'$ be the NNI neighbor of $T$ resulting from swapping $T_2$ and $T_4$ around $e$, leading to an inner edge $e'$ in $T'$. We need to show that $l(A_k(T),T)<l(A_k(T),T')$, or, in other words, we want to show that $l(A_k(T),T)-l(A_k(T),T')<0$.

Our proof strategy is now as follows: We analyze $A_k(T)$ and count characters of certain categories. Note that $A_k(T)$ may contain various characters that have the same parsimony score on $T$ and $T'$. These characters do not contribute to the difference of the parsimony scores and thus need not be considered as they cancel out in the above mentioned difference. So we will only consider characters $f \in A_k(T)$ for which we have $k=l(f,T)\neq l(f,T') $. By Proposition \ref{prop:deltadiff}, these characters fall into precisely two categories: 

\begin{enumerate}
\item Either $\delta(f,T,e)=2$ and $\delta(f,T',e')=1$  or
\item $\delta(f,T,e)=1$ and $\delta(f,T',e')=2$.
\end{enumerate}

In both cases, again by Proposition \ref{prop:deltadiff}, we know that two of the sets $S(\rho_1)$, $S(\rho_2)$, $S(\rho_3)$ and $ S(\rho_4)$ equal $\{a\}$ and the other two equal $\{b\}$. 
Moreover, the characters in Category (1) have $S(\rho_1)=S(\rho_4)\neq S(\rho_2)=S(\rho_3)$, whereas the characters in Category (2) have $S(\rho_1)=S(\rho_2)\neq S(\rho_3)=S(\rho_4)$.

So our proof strategy is to count the number $c_1$ of characters of Category (1) and the number $c_2$ of characters of Category (2) and show that $c_2>c_1$. This will lead to the following conclusion:

\begin{align}\label{eq_main}l(A_k(T),T)-l(A_k(T),T')&=(c_1 \cdot k + c_2 \cdot k) - (c_1 \cdot (k-1) + c_2 \cdot (k+1)) \\& = (c_1 \cdot k + c_2 \cdot k) - (c_1 \cdot k + c_2 \cdot k -c_1+c_2) \nonumber\\&=c_1-c_2 \nonumber\\&<0.\nonumber\end{align}

Here, the first equality is due to the fact that -- as explained above -- characters which do not fall into Categories (1) or (2) need not be considered in the difference. Moreover, this first equality exploits the fact that for characters in Category (1), we have $\delta(f,T',e')=1$ and thus, using  Definition \ref{def_delta} and the fact that $\sum\limits_{i=1}^4 l(f_i,T_i)=\sum\limits_{i=1}^4 l(f_i,T_i')$, $l(f,T')=k-1$ for all $f$ in $A_k(T)$ which belong to Category (1), i.e., they benefit from the NNI move from $T$ to $T'$ by a decrease in their parsimony score from $k$ to $k-1$. Similarly, for characters in Category (2), we have $l(f,T')=k+1$, i.e., such characters suffer from the NNI move from $T$ to $T'$ by an increase in their parsimony score from $k$ to $k+1$. 

\par \vspace{0.5cm}
So all that remains to be shown is that indeed we have $c_1<c_2$. 

In order to calculate $c_1$, in the light of Proposition \ref{prop:deltadiff}, as explained above we need to count the number of characters with $S(\rho_1)=S(\rho_4)\neq S(\rho_2)=S(\rho_3)$ as well as with $|S(\rho_i)|=1$ for all $i=1,\ldots, 4$. This implies that one of the following two possibilities holds:
\begin{enumerate}[(i)]
\item $S(\rho_1)=\{a\}$, $S(\rho_2)=\{b\}$, $S(\rho_3)=\{b\}$, $S(\rho_4)=\{a\}$, or 
\item $S(\rho_1)=\{b\}$, $S(\rho_2)=\{a\}$, $S(\rho_3)=\{a\}$, $S(\rho_4)=\{b\}$.
\end{enumerate}

Note that in both of these possibilities, we need to sum up over all possible values of $l(f_i,T_i)$. More precisely, we know that $l(f,T)=k$ (as $f \in A_k(T)$), and we know that $\delta(f,T,e)=2$. So by Equation \eqref{def_delta}, we must have $\sum\limits_{i=1}^4 l(f_i,T_i)=k-2$ (note that $k-2\geq 0$ is guaranteed as we assume $k\geq 2$). However, these $k-2$ changes can potentially be distributed amongst the four subtrees $T_1,\ldots,T_4$ in many possible ways, and we have to consider them all. To facilitate notation, in the following we let $k_i:=l(f_i,T_i)$ for $i=1,\ldots,4$. Then we must have $k_1+k_2+k_3+k_4=k-2$. Moreover, if we denote by $n_i$ the number of leaves in $T_i$ (for $i=1,\ldots,4$), then we must have $n_1+n_2+n_3+n_4=n$.

Furthermore, note that the number of characters with $l(f_i,T_i)=k_i$ is independent of the fact if $S(\rho_i)=\{a\}$ or $S(\rho_i)=\{b\}$. By Theorem \ref{thm_Fitchrootstate}, it simply equals $N_a(n_i,k_i):=N_{\{a\}}(T_i,k_i)=\binom{n_i-k_i-1}{k_1}\cdot 2^{k_i}$. 

Thus, in summary, we derive:

\begin{align}\label{eqc1}
c_1&=\sum\limits_{\substack{(k_1,k_2,k_3,k_4):\\k_1+k_2+k_3+k_4=k-2\\k_i\in \mathbb{N}_0}} 2 \cdot N_a(n_1,k_1)\cdot N_a(n_2,k_2)\cdot N_a(n_3,k_3) \cdot N_a(n_4,k_4),
\end{align}

where the factor 2 stems from the above two cases (i) and (ii).

\par\vspace{0.5cm}
Now in order to calculate $c_2$, again in the light of Proposition \ref{prop:deltadiff}, we need to count the number of characters with $S(\rho_1)= S(\rho_2)\neq S(\rho_3)= S(\rho_4)$ and with $|S(\rho_i)|=1$ for all $i=1,\ldots, 4$. More precisely, this means that we have two possibilities: 
\begin{enumerate}[(i)]
\item $S(\rho_1)=\{a\}$, $S(\rho_2)=\{a\}$, $S(\rho_3)=\{b\}$, $S(\rho_4)=\{b\}$, or
\item $S(\rho_1)=\{b\}$, $S(\rho_2)=\{b\}$, $S(\rho_3)=\{a\}$, $S(\rho_4)=\{a\}$.
\end{enumerate}

Note that in each of these possibilities, we need to sum up over all possible values of $l(f_i,T_i)$. More precisely, we know that $l(f,T)=k$ (as $f \in A_k(T)$), and we know that $\delta(f,T,e)=1$. So by Equation \eqref{def_delta}, we must have $\sum\limits_{i=1}^4 l(f_i,T_i)=k-1$. However, as above, these $k-1$ changes potentially can be distributed amongst the four subtrees $T_1,\ldots,T_4$ in many possible ways, and we have to consider them all. However, as this time we have $\delta(f,T,e)=1$, one of the $k_i$ needs to get incremented by 1 (compared to the $c_1$ case).

Thus, in summary, we derive:

\begin{align}\label{eqc2}
c_2=\sum\limits_{\substack{(k_1,k_2,k_3,k_4):\\k_1+k_2+k_3+k_4=k-2\\k_i\in \mathbb{N}_0}} 2\cdot & \left( N_a(n_1,k_1+1)\cdot N_a(n_2,k_2)\cdot N_a(n_3,k_3) \cdot N_a(n_4,k_4)\right.\\ &+N_a(n_1,k_1)\cdot N_a(n_2,k_2+1)\cdot N_a(n_3,k_3) \cdot N_a(n_4,k_4)\nonumber \\&+N_a(n_1,k_1)\cdot N_a(n_2,k_2)\cdot N_a(n_3,k_3+1) \cdot N_a(n_4,k_4)\nonumber \\ &\left.+N_a(n_1,k_1)\cdot N_a(n_2,k_2)\cdot N_a(n_3,k_3) \cdot N_a(n_4,k_4+1)\right).\nonumber 
\end{align}

We will now show $c_2>c_1$ by actually showing something stronger: We will show that each of the summands in the sum of $c_1$ is strictly smaller than each of the summands in the sum of $c_2$. This will complete the proof.

\par\vspace{0.5cm}
Note that we have by Theorem \ref{thm_Fitchrootstate}: 

\begin{itemize}
\item $N_a(n_i,k_i)= \binom{n_i-k_i-1}{k_i}\cdot 2^{k_i}= \frac{(n_i-k_i-1)!}{(n_i-2k_i-1)!\cdot k_i!}\cdot 2^{k_i},$
\item $N_a(n_i,k_i+1)=\binom{n_i-(k_i+1)-1}{k_i+1}\cdot 2^{k_i+1} = \binom{n_i-k_i-2}{k_i+1}\cdot 2^{k_i+1}=\frac{(n_i-k_i-2)!}{(n_i-2k_i-3)!\cdot (k_i+1)!}\cdot 2^{k_i+1}$.
\end{itemize}

\par\vspace{0.5cm}
Thus, as long as $n_i \neq k_i+1$ (so that the denominator does not equal 0), we get: \begin{align}\label{Nak+1} N_a(n_i,k_i+1)&=2 \cdot N_a(n_i,k_i)\cdot \frac{(n_i-2k_i-2)\cdot (n_i-2k_i-1)}{(n_i-k_i-1)\cdot (k_i+1)}.\end{align}
\par\vspace{0.5cm}
Moreover, we have $N_a(n_i,k_i)\geq 0$ and $N_a(n_i,k_i+1)\geq 0$ for all $n_i\in \mathbb{N}_{\geq 1}$, $k_i\in \mathbb{N}_0$ by definition. 
\par\vspace{0.5cm}
Last but not least, by the pigeonhole principle, we know that at least one of the $T_i$ has $n_i \geq \left\lceil \frac{n}{4} \right\rceil$ many leaves (because we distribute $n$ leaves of $T$ among the four subtrees $T_1, \ldots, T_4$). Without loss of generality, we therefore may assume that $n_1\geq \left\lceil \frac{n}{4} \right\rceil$. While we do not know how big $k_1$ is, we at least know that $k_1\leq k-2 < n_1-1$ since: 

\begin{align}k_1\leq k-2 &\overset{\mbox{\tiny ass. on $k$}}{<} \left(\frac{1}{2}\cdot\frac{n}{4}+\frac{11}{9}-\frac{1}{18}\sqrt{9\cdot \left(\frac{n}{4}\right)^2+16}\right)-2 \nonumber \\& \overset{\mbox{\tiny mon.incr.}}{\leq} \left(\frac{1}{2}\cdot\left\lceil\frac{n}{4}\right\rceil +\frac{11}{9}-\frac{1}{18}\sqrt{9\cdot \left(\left\lceil\frac{n}{4}\right\rceil\right)^2+16}\right)-2 \nonumber
\\& \overset{\mbox{\tiny mon.incr.}}{\leq}
\left(\frac{1}{2}\cdot n_1 +\frac{11}{9}-\frac{1}{18}\sqrt{9n_1^2+16}\right)-2 \nonumber 
\\& = \label{eq_important}
\frac{n_1}{2}-\frac{7}{9}-\frac{1}{18}\sqrt{9 n_1^2+16}\\&\leq n_1-1. \nonumber
\end{align}

Here, the first inequality is due to the theorem's assumption that $k<\frac{n}{8}+\frac{11}{9}-\frac{1}{18}\sqrt{9\cdot \left(\frac{n}{4}\right)^2+16}$. The second and third inequalities hold as $\frac{n}{4}\leq \left\lceil \frac{n}{4}\right\rceil\leq n_1$ and as the function $f(n)=\frac{1}{2}\cdot\frac{n}{4}+\frac{11}{9}-\frac{1}{18}\sqrt{9\cdot \left(\frac{n}{4}\right)^2+16}$ is monotonically increasing. The last inequality is true as it holds for all $n_1\geq 1$ (and we know that $n_i\geq 1$ for all $i=1,\ldots,4$ as each $T_i$ contains at least one leaf; otherwise $e$ would not be an inner edge).

\par\vspace{0.5cm}
So in particular, we have $n_1 \neq k_1+1$, which means we can use Equation \eqref{Nak+1} to derive $N_a(n_1,k_1+1)$ from $N_a(n_1,k_1)$.

\par\vspace{0.5cm}

Let now $(k_1,k_2,k_3,k_4)$ be a tuple such that $k_1+k_2+k_3+k_4=k-2$ and $k_i \in \mathbb{N}_0$ for all $i=1,\ldots,4$. Let $c_i^{(k_1,k_2,k_3,k_4)}$ denote the summand of $c_i$ corresponding to the tuple $(k_1,k_2,k_3,k_4)$ for $i=1,2$. Then, we have:

\begin{align*}
c_2^{(k_1,k_2,k_3,k_4)} & \overset{\eqref{eqc2}}{\geq}  2 \cdot N_a(n_1,k_1+1) \cdot N_a(n_2,k_2) \cdot N_a(n_3,k_3) \cdot N_a(n_4,k_4)\\& \overset{\eqref{Nak+1}}{=} 2 \cdot \left( 2 \cdot N_a(n_1,k_1)\cdot \frac{(n_1-2k_1-2)\cdot (n_1-2k_1-1)}{(n_1-k_1-1)\cdot (k_1+1)} \right)  \cdot N_a(n_2,k_2) \cdot N_a(n_3,k_3) \cdot N_a(n_4,k_4) \\
&= \frac{(n_1-2k_1-2)\cdot (n_1-2k_1-1)}{(n_1-k_1-1)\cdot (k_1+1)} \cdot 4 \cdot N_a(n_1,k_1) \cdot N_a(n_2,k_2) \cdot N_a(n_3,k_3) \cdot N_a(n_4,k_4) \\ &\overset{\eqref{eqc1}}{=} \frac{(n_1-2k_1-2)\cdot (n_1-2k_1-1)}{(n_1-k_1-1)\cdot (k_1+1)} \cdot 2 \cdot c_1^{(k_1,k_2,k_3,k_4)}.
\end{align*}

Thus, we definitely have $c_2^{(k_1,k_2,k_3,k_4)}>c_1^{(k_1,k_2,k_3,k_4)}$ if $ \frac{2\cdot(n_1-2k_1-2)\cdot (n_1-2k_1-1)}{(n_1-k_1-1)\cdot (k_1+1)} >1$, or, in other words, if $2\cdot(n_1-2k_1-2)\cdot (n_1-2k_1-1)>(n_1-k_1-1)\cdot (k_1+1)$. Using a computer algebra system like Mathematica \cite{Mathematica}, one can easily see that this does indeed hold in case $k_1<\frac{n_1}{2}-\frac{7}{9}-\frac{1}{18}\sqrt{9 n_1^2+16}$. However, we have already seen in Equation \eqref{eq_important} that this inequality holds.

This shows that $c_2^{(k_1,k_2,k_3,k_4)}> c_1^{(k_1,k_2,k_3,k_4)}$ for all tuples $(k_1,k_2,k_3,k_4)$ with $k_i \in \mathbb{N}_0$ for all $i=1,\ldots,4$ such that $k_1+k_2+k_3+k_4=k-2$. This in turn shows that $c_2>c_1$, which in the light of Equation \eqref{eq_main} proves that $l(A_k(T),T)>l(A_k(T),T')$ and thus completes the proof. \qedhere 
\end{proof}

\par\vspace{0.5cm}
We conclude this section with the following result, generalizing Theorem \ref{thm:NNImain} from alignments $A_k(T)$, which contain all characters with parsimony score precisely $k$ on $T$, to alignments of characters of score at most $k$ on $T$.

\begin{corollary} Let $k\in \mathbb{N}$. Let $T$ be a binary phylogenetic $X$-tree with $|X|=n>9 k-11+ \sqrt{9 k^2-22 k+17}$. Let $T'$ be an NNI-neighbor of $T$. For $i=0,\ldots,k$, let $m_i \in \mathbb{N}_0$ and let $A$ be an alignment consisting of concatenations of $A_i(T)$ such that $A_i(T)$ is contained $m_i$ times in $A$ (for $i=0,\ldots,k$). Then, $l(A,T)\leq l(A,T')$. If we additionally have $m_i>0$ for some $i>0$, then the inequality is strict.
\end{corollary}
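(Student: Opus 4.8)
The plan is to reduce the corollary directly to Theorem~\ref{thm:NNImain} by linearity of the parsimony score on concatenations. First I would recall that for any alignment $B$ and any tree $S$, the parsimony score is additive over the characters of $B$, so $l(A,S)=\sum_{i=0}^k m_i\cdot l(A_i(T),S)$ by the definition of concatenation as a union of multisets. Applying this with $S=T$ and $S=T'$ and subtracting gives
\begin{equation*}
l(A,T)-l(A,T')=\sum_{i=0}^k m_i\bigl(l(A_i(T),T)-l(A_i(T),T')\bigr).
\end{equation*}
It therefore suffices to show that every summand is $\leq 0$, and that at least one summand is strictly negative whenever some $m_i>0$ with $i>0$.

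For the term $i=0$: $A_0(T)$ consists of the two constant characters, which have parsimony score $0$ on \emph{every} tree, so $l(A_0(T),T)=l(A_0(T),T')=0$ and this term contributes nothing regardless of $m_0$. For each $i$ with $1\leq i\leq k$, I would check that the hypothesis $n>9k-11+\sqrt{9k^2-22k+17}$ implies the corresponding hypothesis for $i$ in Theorem~\ref{thm:NNImain}, i.e.\ that $i<\frac{n}{8}+\frac{11}{9}-\frac{1}{18}\sqrt{9\cdot(n/4)^2+16}$, equivalently $n>9i-11+\sqrt{9i^2-22i+17}$. This follows because the right-hand side $g(i):=9i-11+\sqrt{9i^2-22i+17}$ is monotonically increasing in $i$ for $i\geq 1$ (the term $9i^2-22i+17$ has positive derivative $18i-22$ for $i\geq 2$, and the cases $i=1,2$ can be checked directly against $g(k)$), so $g(i)\leq g(k)<n$. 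Hence Theorem~\ref{thm:NNImain} applies to each such $i$ and yields $l(A_i(T),T)<l(A_i(T),T')$, so $m_i\bigl(l(A_i(T),T)-l(A_i(T),T')\bigr)\leq 0$ with strict inequality precisely when $m_i>0$.

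Combining these observations, every summand is $\leq 0$, giving $l(A,T)\leq l(A,T')$; and if $m_i>0$ for some $i>0$ then that summand is strictly negative while all others are $\leq 0$, giving the strict inequality $l(A,T)<l(A,T')$. I expect the only mildly delicate point to be the monotonicity check on $g(i)$ — in particular handling the small cases $i=1,2$ where the quadratic under the root is not yet increasing — but this is a routine finite verification (and one could alternatively invoke a computer algebra system as the authors do elsewhere). Everything else is bookkeeping with the additivity of $l$ and the behaviour of constant characters.
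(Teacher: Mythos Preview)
Your proof is correct and follows essentially the same approach as the paper's: additivity of the parsimony score over concatenations, the trivial contribution of $A_0(T)$, and an application of Theorem~\ref{thm:NNImain} to each $A_i(T)$ with $1\leq i\leq k$. The only difference is that you make explicit the monotonicity of $g(i)=9i-11+\sqrt{9i^2-22i+17}$ needed to propagate the hypothesis from $k$ down to each $i\leq k$, which the paper leaves implicit in its appeal to Observation~\ref{obs:4k+2}.
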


\begin{proof}
The corollary is a direct consequence of Observation \ref{obs:4k+2} as well as Theorem \ref{thm:NNImain}, together with the observation that in case that $k=0$, the alignment $A_k(T)$ does not actually depend on $T$ -- it contains only characters that assign the same state to all elements of $X$. Thus, we have $A_0(T)=A_0(T')$ and thus also $l(A,T)=l(A,T')$ if $A$ contains only copies of $A_0(T)$ and no copies of $A_i(T)$ for $i>0$. In all other cases, the inequality is strict by Theorem \ref{thm:NNImain}.
\end{proof}

\subsection{Investigating the NNI neighborhood when \texorpdfstring{$k$}{k} is \enquote{too large}}

In this section, we want to use our findings from the previous section to derive values for $k$ for which $T$ is \emph{not} the most parsimonious tree for $A_k(T)$ in its own NNI neighborhood. In order to do so, we will exploit Equation \eqref{eq_main}. In particular, we will show that whenever $k$ assumes a certain value (depending on $n$), we can construct a tree $T$ which has an NNI neighbor $T'$ such that $l(A_k(T),T)>l(A_k(T),T')$. In other words, in these settings, $T$ is not even contained in the set of most parsimonious trees for its own $A_k$-alignment. We now state this formally in the following theorem, which will generalize Example \ref{ex:8taxa}, in which we had $n=8$ and $k=2=\frac{n-4}{2}$.

\begin{theorem}\label{thm_badcases} Let $n\in \mathbb{N}_{\geq 8}$ and let $k$ be chosen according to Table \ref{tab:badcases}. Then, there exists a binary phylogenetic $X$-tree $T$ with $|X|=n$ such that $T$ has an NNI neighbor $T'$ with $l(A_k(T),T)>l(A_k(T),T')$. In other words, $T'$ is more parsimonious for $A_k(T)$ than $T$.

\begin{table}[ht]
\begin{tabular}{|l|c|}
\hline
\multicolumn{1}{|c|}{\textbf{condition on $n$}} & \multicolumn{1}{c|}{\textbf{choice of $k$}} \\ \hline
$(n \mod 8)\equiv 0$                                  & $\frac{n-4}{2}$                             \\[5pt] \hline 
$(n \mod 8)\in\{1,7\}$                          & $\frac{n-3}{2}$                             \\[5pt] \hline
$(n \mod 8)\in\{2,6\}$                          & $\frac{n-2}{2}$                             \\[5pt] \hline
$(n \mod 8)\in\{3,5\}$                          & $\frac{n-1}{2}$                             \\[5pt] \hline
$n \mod 8 \equiv 4$                                    & $\frac{n}{2}$                               \\[5pt] \hline
\end{tabular}
\caption{Choices of $k$ for different values of $n$ to generate a \enquote{bad case} as described in Theorem \ref{thm_badcases}.}\label{tab:badcases}\end{table}
\end{theorem}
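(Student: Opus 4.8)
The plan is to reverse-engineer the construction from Equation \eqref{eq_main}: that identity, derived in the proof of Theorem \ref{thm:NNImain}, shows that for NNI neighbors $T$ and $T'$ (obtained by swapping $T_2$ and $T_4$ around an inner edge $e$) we have $l(A_k(T),T)-l(A_k(T),T')=c_1-c_2$, where $c_1$ counts the Category (1) characters and $c_2$ the Category (2) characters. So to produce a ``bad case'' it suffices to engineer a tree $T$, an inner edge $e$, and a leaf-distribution $(n_1,n_2,n_3,n_4)$ with $n_1+n_2+n_3+n_4=n$ such that $c_1>c_2$ for the chosen $k$. By the summand-wise computation in the proof of Theorem \ref{thm:NNImain}, the relevant comparison for each tuple $(k_1,k_2,k_3,k_4)$ with $\sum k_i=k-2$ reduces to whether $2(n_i-2k_i-2)(n_i-2k_i-1)$ exceeds, equals, or falls below $(n_i-k_i-1)(k_i+1)$; the direction of the inequality $c_1$ versus $c_2$ is controlled by these local ratios (and also by which of the $N_a(n_i,k_i)$ vanish). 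The idea is to make the subtrees small enough — i.e. $k$ close to $n/2$ — that the binomial coefficients $N_a(n_i,k_i+1)=\binom{n_i-k_i-2}{k_i+1}2^{k_i+1}$ are forced to be zero or tiny, so that the $c_2$-sum (which needs one extra substitution crammed into some $T_i$) is starved while the $c_1$-sum is not.

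Concretely, I would first fix the ``balanced'' distribution: split $n$ as evenly as possible into $n_1,n_2,n_3,n_4$ (sizes roughly $n/4$ each), which is realizable as an NNI configuration as long as each $n_i\geq 1$, i.e. $n\geq 4$; for $n\geq 8$ all four subtrees can even be taken nontrivial. Then I would analyze, case by case according to $n\bmod 8$, the single dominant tuple — likely $(k_1,k_2,k_3,k_4)$ as equal as possible, summing to $k-2$ — and show that with the $k$ from Table \ref{tab:badcases}, every term $N_a(n_i,k_i+1)$ appearing in $c_2$ has a top binomial argument $n_i-k_i-2<k_i+1$ (hence equals $0$), so that tuple contributes $0$ to $c_2$ but a positive amount $2\prod N_a(n_i,k_i)$ to $c_1$ (after checking that each $N_a(n_i,k_i)=\binom{n_i-k_i-1}{k_i}2^{k_i}$ is genuinely positive, i.e. $n_i-k_i-1\geq k_i$, i.e. $k_i\leq\lfloor(n_i-1)/2\rfloor$). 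One must then argue that the remaining tuples do not reverse this: either they contribute $0$ to both $c_1$ and $c_2$, or their $c_1$-contribution still dominates. Because $k$ is at the edge $\approx n/2$, the number of admissible tuples is small and the binomials are all near their support boundary, so this bookkeeping should be short but must be done carefully for each of the five residue classes.

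The residue-class structure of Table \ref{tab:badcases} strongly suggests that the construction is: write $n=4q+r$ with $r\in\{0,\dots,7\}$ handled via $n\bmod 8$ to control parities, choose $(n_1,n_2,n_3,n_4)$ to be as balanced as possible, pick the unique $k$ that makes the ``parsimony budget'' $k-2$ split into $k_i=\lfloor n_i/2\rfloor -1$ (or a near-variant) per subtree — the maximum score a subtree of that size can carry while still admitting characters — and verify $n_i-k_i-2<0$ so that $N_a(n_i,k_i+1)=0$. The five cases in the table presumably correspond exactly to the five ways the ``$-2$'' offset and the ceiling/floor in $\lceil n/4\rceil$ interact with the parity of the $n_i$. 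So after setting up the general framework I would verify each line of the table by plugging in representative small $n$ (e.g. $n=8,16$ for the first row; the example $n=8$, $k=2$ is already in the paper) and then give the general parity argument.

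The main obstacle I expect is ruling out the ``other'' tuples: showing that no tuple $(k_1,k_2,k_3,k_4)$ with $\sum k_i=k-2$ has its $c_2$-contribution exceed its $c_1$-contribution by enough to swamp the privileged tuple's deficit. Since Theorem \ref{thm:NNImain}'s proof shows the summand inequality flips precisely when $2(n_i-2k_i-2)(n_i-2k_i-1)<(n_i-k_i-1)(k_i+1)$ — which for $k$ near $n/2$ fails for \emph{most} subtrees — one has to be sure the favorable tuple is not outweighed, rather than merely that it exists. I would handle this by choosing the leaf-distribution so that the support constraint $\sum\lfloor(n_i-1)/2\rfloor \geq k-2$ is \emph{tight}, leaving essentially one feasible tuple (up to the symmetric relabelings already accounted for by the factor $2$), which collapses the sum to a single term and makes the comparison $c_1$ versus $c_2$ a one-line inequality of binomials. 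Verifying that this tightness holds for exactly the $k$ in each row of Table \ref{tab:badcases} — and that the resulting $T$ is a legitimate binary $X$-tree with the required NNI neighbor — is then the technical heart of the argument.
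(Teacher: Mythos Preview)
Your proposal is correct and follows essentially the same route as the paper: balance the four subtree sizes as evenly as possible, choose $k$ so that the budget constraint $\sum_i \lfloor (n_i-1)/2\rfloor = k-2$ is \emph{tight} (hence only one admissible tuple $(k_1,k_2,k_3,k_4)$), and then observe that this single tuple gives $c_1>0$ while every $N_a(n_i,k_i+1)$ vanishes, forcing $c_2=0$. One small slip: the vanishing condition you need is $n_i-k_i-2<k_i+1$ (as you state earlier), not the stronger $n_i-k_i-2<0$ you write later; with the correct $\widehat{k}_i=\lfloor (n_i-1)/2\rfloor$ the former holds but the latter generally does not.
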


\begin{proof} We explicitly construct a tree $T$ and its NNI neighbor $T'$ with the described properties. We do this by using a strategy closely related to that of the proof of Theorem \ref{thm:NNImain}.

Let $n=8m+b$ with $m \in \mathbb{N}_{\geq 1}$ and $b \in \{0,\ldots,7\}$. We first choose four parameters $n_1$, $n_2$, $n_3$ and $n_4$ as follows:

\begin{align*} 
n_1:=\begin{cases} 2m & \mbox{if $b=0$}\\ 
2m+1 &\mbox{if $b\in \{1,\ldots , 4\}$} \\
2m+2 & \mbox{if $b\geq 5$}, \end{cases} 
&, \hspace{0.5cm}
n_2:=\begin{cases} 2m & \mbox{if $b\leq 1$}\\ 
2m+1 &\mbox{if $b\in \{2,\ldots , 5\}$} \\
2m+2 & \mbox{if $b\geq 6$}, \end{cases}
\\ 
n_3:=\begin{cases} 2m & \mbox{if $b\leq 2$}\\ 
2m+1 &\mbox{if $b\in \{3,\ldots ,6\}$} \\
2m+2 & \mbox{if $b=7$}, \end{cases}
&, \hspace{0.5cm}
n_4:=\begin{cases} 2m & \mbox{if $b\leq 3$}\\ 
2m+1 &\mbox{if $b\geq 4$}. \end{cases}
\end{align*}

Note that in all cases, we have $n_1+n_2+n_3+n_4=n$. To construct $T$ and $T'$, we now start with a single edge $e$, around which we attach rooted binary subtrees $T_1$, $T_2$, $T_3$ and $T_4$ with $n_1$, $n_2$, $n_3$ and $n_4$ leaves, respectively. Note that the particular choice of $T_i$ does not matter as long as is has $n_i$ many leaves. We make sure that in $T$, $T_1$ and $T_2$ are together on the same side of $e$, whereas $T_3$ and $T_4$ are on the other side of $e$. We then arbitrarily label the leaves such that $X=\{1,\ldots,n\}$. This is possible as $n_1+n_2+n_3+n_4=n$. We now define $T'$ as the tree resulting from $T$ by swapping $T_2$ and $T_4$. It remains to be shown that for $T$ and $T'$ with subtree sizes $n_1,\ldots,n_4$ as described above, we indeed have $l(A_k(T),T)>l(A_k(T),T')$. 

In order to see this, we now choose $k$ according to Table \ref{tab:badcases}. In particular, this leads to:

\begin{itemize} \item If $b=0$ and $n=8m+0$, we have $k=\frac{n-4}{2}=\frac{8m-4}{2}=4m-2$.
\item If $b=1$ and  $n=8m+1$, we have $k=\frac{n-3}{2}=\frac{(8m+1)-3}{2}=4m-1$.
\item  If $b=2$ and $n=8m+2$, we have $k=\frac{n-2}{2}=\frac{(8m+2)-2}{2}=4m$.
\item  If $b=3$ and  $n=8m+3$, we have $k=\frac{n-1}{2}=\frac{(8m+3)-1}{2}=4m+1$.
\item In all cases in which $b \geq 4$, we have $k=4m+2$.
\end{itemize}

From now on, we follow the strategy of the proof of  Theorem \ref{thm:NNImain}. In particular, we observe again that characters in $A_k(T)$ which have the same parsimony score on $T$ and $T'$ do not contribute to the difference $l(A_k(T),T)-l(A_k(T),T')$. We have seen in the said proof that only the $c_1$ many characters of Category 1 (for which two of the sets $S(\rho_1)$, $S(\rho_2)$, $S(\rho_3)$ and $ S(\rho_4)$ equal $\{a\}$ and the other two equal $\{b\}$ and $S(\rho_1)=S(\rho_4)\neq S(\rho_2)=S(\rho_3)$) and the $c_2$ many characters of Category 2 (for which two of the sets $S(\rho_1)$, $S(\rho_2)$, $S(\rho_3)$ and $ S(\rho_4)$ equal $\{a\}$ and the other two equal $\{b\}$ and $S(\rho_1)=S(\rho_2)\neq S(\rho_3)=S(\rho_4)$) contribute to this difference. In particular, we know $l(A_k(T),T)-l(A_k(T),T')=c_1-c_2$. 

\par \vspace{0.5cm}
We now want to calculate $c_1$ and $c_2$. We will start with showing that $c_1>0$.

In the following, we consider the rooted version $T_e$ of $T$ and denote by $\rho_i$ the root vertex of $T_i$ for $i\in \{1,\ldots,4\}$. We know by Proposition \ref{prop:deltadiff} that for all characters $f \in A_k(T)$ from Category (1), we need to make sure that $|S(\rho_i)|=1$ for all $i=1,\ldots,4$ according to the Fitch algorithm. Using Theorem \ref{thm_Fitchrootstate}, we can see that $N_a(T,k)=0$ if $k>\lceil \frac{n}{2}\rceil - \frac{1}{2}$, because then $n-k-1<\frac{n}{2}$ and $k\geq \frac{n}{2} $, so that the binomial coefficient $\binom{n-k-1}{k}$ equals 0. Therefore, we know that in order for $|S(\rho_i)|=1$ for all $i=1,\ldots,4$ to be possible, the number $k_i$ (with $i\in \{1,\ldots,4\}$)  of union nodes assigned by the Fitch algorithm within $T_i$ cannot exceed $\left\lceil\frac{n_i}{2}\right\rceil-\frac{1}{2}$. In our cases, this translates to: 

\begin{itemize} \item If $b=0$, $n=8m+0$ and $k=4m-2$, we have $n_1=n_2=n_3=n_4=2m$ and thus $k_1,k_2,k_3,k_4\leq m-1$. Note that this implies $\sum\limits_{i=1}^4 k_i\leq 4m-4= k-2$.
\item If $b=1$, $n=8m+1$ and $k=4m-1$, we have $n_1=2m+1$ and $n_2=n_3=n_4=2m$, and thus $k_1\leq m$ and $k_2,k_3,k_4\leq m-1$. Note that this implies $\sum\limits_{i=1}^4 k_i\leq 4m-3= k-2$.
\item  If $b=2$, $n=8m+2$ and $k=4m$, we have $n_1=n_2=2m+1$ and $n_3=n_4=2m$, and thus $k_1,k_2\leq m$ and $k_3,k_4\leq m-1$. Note that this implies $\sum\limits_{i=1}^4 k_i\leq 4m-2= k-2$.
\item  If $b=3$,  $n=8m+3$ and $k=4m+1$, we have $n_1=n_2=n_3=2m+1$ and $n_4=2m$, and thus $k_1,k_2,k_3\leq m$ and $k_4\leq m-1$. Note that this implies $\sum\limits_{i=1}^4 k_i\leq 4m-1= k-2$.
\item In all cases with $b \geq 4$, we have $k=4m+2$ and $k_1,k_2,k_3,k_4\leq m$. Note that this implies $\sum\limits_{i=1}^4 k_i\leq 4m= k-2$.
\end{itemize}

Now, in order for $f$ to be contained in $A_k(T)$, we require $k=l(f,T)=\sum\limits_{i=1}^4 k_i + \delta(f,T,e).$ However, we know from the above considerations that $\sum\limits_{i=1}^4 k_i\leq k-2$, and we know from the proof of Proposition \ref{prop:deltadiff} that $\delta(f,T,e) \leq 2$. Thus, in order to have $k=l(f,T)$, we need all the $k_i$ to reach their respective upper bound (cf. above bullet points).

We now finally turn our attention to $c_1$, of which we want to show that it is strictly positive. By Equation \eqref{eqc1} as well as the above considerations (showing that there is only one choice for each $k_i$, namely its upper bound according to the above bullet point list, making the summation in Equation \eqref{eqc1} redundant), we know that for our choice of $n_1$, $n_2$, $n_3$ and $n_4$, we have $c_1=2 \cdot N_a(n_1,\widehat{k}_1)\cdot N_a(n_2,\widehat{k}_2)\cdot N_a(n_3,\widehat{k}_3)\cdot N_a(n_4,\widehat{k}_4),$ where we have $\widehat{k}_i=\left\lfloor \frac{n_i-1}{2}\right\rfloor$, as this is the maximum possible integer \emph{not}  exceeding $\left\lceil \frac{n_i}{2}\right\rceil - \frac{1}{2}$. Thus, by Theorem \ref{thm_Fitchrootstate}, we know that $N_a(n_i,\widehat{k}_i)>0$ for all $i$. This immediately shows that $c_1>0$.

\par\vspace{0.5cm}
Next, we show that $c_2=0$. By Equation \eqref{eqc2} as well as the observation from above that we only have one choice for each $k_i$ (namely its respective upper bound $\widehat{k}_i$) in order to reach $k_1+k_2+k_3+k_4=k-2$ (again making the summation in Equation \eqref{eqc2} redundant), we have: 

\begin{align*}
c_2=2\cdot & \left( \underbrace{N_a(n_1,\widehat{k}_1+1)}_{=0}\cdot N_a(n_2,\widehat{k}_2)\cdot N_a(n_3,\widehat{k}_3) \cdot N_a(n_4,\widehat{k}_4)\right.\\ &+N_a(n_1,\widehat{k}_1)\cdot \underbrace{N_a(n_2,\widehat{k}_2+1)}_{=0}\cdot N_a(n_3,\widehat{k}_3) \cdot N_a(n_4,\widehat{k}_4)\nonumber \\&+N_a(n_1,\widehat{k}_1)\cdot N_a(n_2,\widehat{k}_2)\cdot \underbrace{N_a(n_3,\widehat{k}_3+1)}_{=0} \cdot N_a(n_4,\widehat{k}_4)\nonumber \\ &\left.+N_a(n_1,\widehat{k}_1)\cdot N_a(n_2,\widehat{k}_2)\cdot N_a(n_3,\widehat{k}_3) \cdot \underbrace{N_a(n_4,\widehat{k}_4+1)}_{=0}\right)\\=0.
\end{align*}

Here, the fact that $N_a(n_i,\widehat{k}_i+1)=0$ stems from $\widehat{k}_i=\left\lfloor \frac{n_i-1}{2}\right\rfloor$, which implies that $\widehat{k}_i+1$ exceeds $\left\lceil \frac{n_i}{2}\right\rceil - \frac{1}{2}$. This, by Theorem \ref{thm_Fitchrootstate}, implies $N_a(n_i,\widehat{k}_i+1)=0$.

So in summary, we have seen $c_1>0$ and $c_2=0$. As in the proof of Theorem \ref{thm:NNImain}, we can then calculate the difference of parsimony scores of $A_k(T)$ on $T$ and $T'$ as follows: 

\begin{align*} l(A_k(T),T)-l(A_k(T),T')=c_1-c_2=c_1>0, \end{align*}

which shows that $l(A_k(T),T')<l(A_k(T),T)$. This completes the proof. \qedhere

\end{proof}

\par\vspace{0.5cm}

\section{Discussion and outlook}
In this paper, we have shown that as long as $0<k<\frac{n}{8}+\frac{11}{9}-\frac{1}{18}\sqrt{9\cdot \left(\frac{n}{4}\right)^2+16}$, any binary phylogenetic tree $T$ is the unique maximum parsimony tree for $A_k(T)$ within its  NNI neighborhood. The most obvious question for future research arising from this result is if this also holds outside of this neighborhood. Note that by Theorem \ref{thm_k<=2}, for $k=1$ as well as for the combination of $k= 2$ and $n\geq 9$, this is actually the case. Moreover, the simulations presented in \cite{pablo} suggest that the result may also hold outside of the NNI neighborhood for the combinations of $k=3$ and $n\geq 12$, $k=4$ and $n\geq 15$ as well as $k=5$ and $n\geq 18$. This leads to the following conjecture, which is slightly stronger than Conjecture \ref{conj}:
\begin{conjecture}
Let $T$ be a binary phylogenetic $X$-tree wiht $|X|=n$. Let $k \leq \frac{n-3}{3}$. Then, $T$ is the unique maximum parsimony tree for $A_k(T)$.
\end{conjecture}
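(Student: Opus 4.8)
The plan is to prove the conjecture by showing that $l(A_k(T),T') > l(A_k(T),T)$ for \emph{every} binary phylogenetic $X$-tree $T' \neq T$; since every $f \in A_k(T)$ has $l(f,T)=k$, this amounts to $\sum_{f \in A_k(T)}\bigl(l(f,T') - l(f,T)\bigr) > 0$, i.e.\ to arguing that, suitably weighted, the characters whose parsimony score increases from $T$ to $T'$ outnumber those whose score decreases. A clean target that would imply this: because the NNI graph on the finite set of binary phylogenetic $X$-trees is connected, any strictly decreasing NNI-descent of the map $\phi\colon T'' \mapsto l(A_k(T),T'')$ terminates at an NNI-local minimum, so it suffices to show that $T$ is the \emph{unique} NNI-local minimum of $\phi$. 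Theorem~\ref{thm:NNImain} already supplies one half of this — that $T$ is a strict NNI-local minimum — so what remains is: \textbf{for every $T' \neq T$ there is an NNI move on $T'$ that strictly decreases $\phi$.}

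For that remaining step I would, given $T' \neq T$, choose a candidate NNI move on $T'$ ``towards'' $T$ — for instance (via the splits-equivalence theorem) around an inner edge of $T'$ whose split is incompatible with a split of $T$, or around the first edge of a shortest NNI path from $T'$ to $T$ — and then run the counting argument of the proof of Theorem~\ref{thm:NNImain} for this move, but applied to the alignment $A_k(T)$ instead of $A_k(T')$. Proposition~\ref{prop:deltadiff} still applies: among characters of $A_k(T)$ whose score changes across this NNI move, two of the four Fitch root-states at the chosen edge are $\{a\}$, two are $\{b\}$, they pair in one of the two ways, and the $\delta$-values are $\{1,2\}$. One then tries to reproduce the per-tuple comparison behind \eqref{eqc1}--\eqref{eqc2} — sum over distributions $(k_1,k_2,k_3,k_4)$ of the remaining changes among the four hanging subtrees, count via Theorem~\ref{thm_Fitchrootstate}, and show the score-decreasing class is strictly smaller than the score-increasing one — now for the full range $n \geq 3k+3$.

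The hard part, and the reason this genuinely exceeds Theorem~\ref{thm:NNImain}, is twofold. First, the counting in that proof works only because the four subtrees $T_1,\dots,T_4$ are subtrees of $T$ itself, so that membership in $A_k(T)$ factors through the subtree parsimony scores and Theorem~\ref{thm_Fitchrootstate} can be applied to each independently; when $T'$ is far from $T$, the subtrees hanging off an edge of $T'$ are not subtrees of $T$, membership in $A_k(T)$ no longer factors through them, and one must control $l(\cdot,T')$ versus $l(\cdot,T)$ at arbitrary NNI distance — equivalently, understand the structure of $A_k(T)$ as ``seen'' by other trees, which is exactly what is missing. Second, the pigeonhole estimate in \eqref{eq_important} — some subtree has $\ge \lceil n/4\rceil$ leaves, and $\lceil n/4\rceil > k_i+1$, so that $N_a(n_i,k_i) > 0$ and $N_a(n_i,k_i+1)/N_a(n_i,k_i)$ can be pushed past $\tfrac12$ — collapses once $n$ is only of order $3k$: a subtree may have fewer than $k_i+1$ leaves, so $N_a(n_i,k_i)=0$ and entire families of characters vanish, making the local comparison vacuous. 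Getting past this would seem to need either a global combinatorial identity replacing the term-by-term comparison, or an induction on $n$ (delete a pendant subtree, track how $A_k$ behaves under leaf deletion) together with a new monotonicity lemma for the Fitch counts of Theorem~\ref{thm_Fitchrootstate} capable of handling characters whose score changes by more than $1$ under a non-NNI move. I expect this last point to be the real crux.
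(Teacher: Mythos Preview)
The statement you are addressing is presented in the paper as an open \emph{conjecture} (in the Discussion and outlook section), not as a theorem; the paper offers no proof and only motivates the bound $k\le\frac{n-3}{3}$ from the simulations of \cite{pablo}. So there is no paper-proof to compare your attempt against, and your labelling ``Towards a proof'' is appropriate: neither you nor the paper has a proof.

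On the substance of your outline, two remarks. First, your sentence ``Theorem~\ref{thm:NNImain} already supplies one half of this --- that $T$ is a strict NNI-local minimum'' overstates what is available: Theorem~\ref{thm:NNImain} only establishes this for $k$ roughly below $n/12$ (cf.\ Observation~\ref{obs:4k+2}), not for the full conjectured range $k\le\frac{n-3}{3}$. You do acknowledge this later when you note that the pigeonhole step \eqref{eq_important} ``collapses once $n$ is only of order $3k$'', but as written the first paragraph suggests that only the \emph{global} step is missing, when in fact even the \emph{local} statement at $T$ itself is open in the conjectured range. Second, your decomposition into (a) ``$T$ is a strict local minimum'' and (b) ``every $T'\neq T$ has a $\phi$-decreasing NNI neighbour'' is slightly redundant: (b) alone already forces $T$ to be the unique local (hence global) minimum, since the tree space is finite and some local minimum must exist.

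Your diagnosis of the genuine obstruction is accurate and matches the paper's own discussion: the termwise comparison behind \eqref{eqc1}--\eqref{eqc2} relies on the four pendant subtrees at the NNI edge being subtrees of $T$ itself, so that membership in $A_k(T)$ factors through their Fitch counts via Theorem~\ref{thm_Fitchrootstate}; for an NNI move at a distant tree $T'$ this factorisation is lost, and no substitute is currently known. The paper explicitly flags both this gap and the looseness of the $n\approx 12k$ bound as directions for future work.
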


Another question is if $k$ really needs to be as small as required by Theorem \ref{thm:NNImain} within the NNI neighborhood. In the proof of this theorem, we made $k$ so small that not only $c_1<c_2$ as required, but that each and every summand of $c_1$ is smaller than every summand of $c_2$. Relaxing this strict condition might lead to an improved bound for $k$. We conjecture that improving $k$ is indeed possible, particularly in the light of Conjecture \ref{conj} as well as the findings presented in \cite{pablo}.

Our second result, namely the extension of Example \ref{ex:8taxa} to arbitrary values of $n\geq 8$ in Theorem \ref{thm_badcases}, in which $k$ is in the magnitude of $\frac{n}{2}$ (cf. Table \ref{tab:badcases}), is also of particular interest. In our construction of NNI neighbors $T$ and $T'$ such that $l(A_k(T),T')<l(A_k(T),T)$, in the case where $n$ is a multiple of 8 and $k=\frac{n-4}{2}$, we have $n=2k+4$. Similarly, if $n \mod 8 \equiv 1$, we have $k=\frac{n-3}{2}$ and thus $n=2k+3$. However, it was shown in \cite{Wilde2023} that for all values of $n\geq 2k+3$, the $A_k$-alignment $A_k(T)$ of $T$ is unique within the NNI neighborhood of $T$. Yet our result shows that despite this uniqueness, this does not need to imply optimality. So the question as to whether or not $T$ is the unique maximum parsimony tree of $A_k(T)$ depends not only on the question of
whether or not $T$ shares this alignment with another tree. Even in cases in which equality of scores
can be excluded, another tree might still be better than $T$ concerning the parsimony criterion -- even in the NNI neighborhood of $T$. This shows that the papers \cite{Fischer2021} and \cite{Wilde2023}, which deal with the uniqueness of the $A_k$-alignment, indeed have a different flavor than \cite{Fischer2019} as well as the present paper, which deal with the reconstructibility of $T$ from $A_k(T)$ with parsimony.

Last but not least, we hope that future research will close the gap between the case where $k$ is in the magnitude of $\frac{n}{2}$ and in which we know by Theorem \ref{thm_badcases} that the NNI neighborhood of $T$ can contain trees with a smaller parsimony score than $T$, and the case where $k$ is in the magnitude of $\frac{n}{12}$,  in which we know by Theorem \ref{thm:NNImain} and Observation \ref{obs:4k+2} that this cannot happen.

\section*{Acknowledgements} The author wants to thank Mike Steel for bringing Theorem \ref{thm_Fitchrootstate} to her attention, which helped to improve an earlier bound. Additionally, the author thanks Petula Diemke for pointing out a small error in an earlier version of the manuscript, as well as two anonynmous reviewers for their helpful comments which helped to improve the paper. Moreover, the author wishes to thank Mirko Wilde, Sophie Kersting, Linda Knüver and Stephan Dominique Andres for helpful discussions on this topic. 

\section*{Conflict of interest} The author herewith certifies that she has no affiliations with or involvement in any
organization or entity with any financial (such as honoraria; educational grants; participation in speakers’ bureaus;
membership, employment, consultancies, stock ownership, or other equity interest; and expert testimony or patent-licensing
arrangements) or non-financial (such as personal or professional relationships, affiliations, knowledge or beliefs) interest in the subject matter discussed in this paper.

\section*{Data availability statement} 
Data sharing is not applicable to this article as no new data were created or analyzed in this study.

\bibliographystyle{plain}
\bibliography{References-NEW}   

\begin{thebibliography}{10}

\bibitem{Ariew1976}
Roger Ariew.
\newblock {\em Ockham's Razor: A historical and philosophical analysis of
  {O}ckham's principle of parsimony}.
\newblock PhD thesis, University of Illinois at Urbana-Champaign, 1976.

\bibitem{Buneman1971}
Peter Buneman.
\newblock {\em The recovery of trees from measures of dissimilarity}, pages
  387--395.
\newblock Edinburgh University Press, 1971.

\bibitem{Felsenstein1978}
Joseph Felsenstein.
\newblock {Cases in which parsimony or compatibility methods will be positively
  misleading}.
\newblock {\em Systematic Biology}, 27(4):401--410, 12 1978.

\bibitem{Felsenstein2004}
Joseph Felsenstein.
\newblock {\em Inferring phylogenies}.
\newblock Sinauer Associates, Inc., 2004.

\bibitem{Fischer2019}
Mareike Fischer.
\newblock On the uniqueness of the maximum parsimony tree for data with up to
  two substitutions: An extension of the classic {B}uneman theorem in
  phylogenetics.
\newblock {\em Molecular phylogenetics and evolution}, 137:127--137, 2019.

\bibitem{Fischer2021}
Mareike Fischer and Volkmar Liebscher.
\newblock On the balance of unrooted trees.
\newblock {\em J. Graph Algorithms Appl.}, 25:133--150, 2021.

\bibitem{Fitch}
Walter~M. Fitch.
\newblock Toward defining the course of evolution: minimum change for a
  specific tree topology.
\newblock {\em Systematic Biology}, 20(4):406--416, 1971.

\bibitem{pablo}
Pablo~A. Goloboff and Mark Wilkinson.
\newblock On defining a unique phylogenetic tree with homoplastic characters.
\newblock {\em Molecular Phylogenetics and Evolution}, 122:95 -- 101, 2018.

\bibitem{Hartigan1973}
John~A. Hartigan.
\newblock Minimum mutation fits to a given tree.
\newblock {\em Biometrics}, 29(1):53--65, 1973.

\bibitem{Kolaczkowski2004}
Bryan Kolaczkowski and Joseph~W. Thornton.
\newblock Performance of maximum parsimony and likelihood phylogenetics when
  evolution is heterogeneous.
\newblock {\em Nature}, 431(7011):980--984, Oct 2004.

\bibitem{Mathematica}
Wolfram Research.
\newblock Mathematica 13.3, 2023.

\bibitem{Semple2003}
Charles Semple and Mike Steel.
\newblock {\em Phylogenetics ({O}xford {L}ecture {S}eries in {M}athematics and
  {I}ts {A}pplications)}.
\newblock Oxford University Press, 2003.

\bibitem{Sober2015}
Elliott Sober.
\newblock {\em Ockham's razors: A user's manual}.
\newblock Cambridge University Press, Cambridge, 2015.

\bibitem{Sourdis1988}
John Sourdis and Masatoshi Nei.
\newblock Relative efficiencies of the maximum parsimony and distance-matrix
  methods in obtaining the correct phylogenetic tree.
\newblock {\em Molecular biology and evolution}, 5,3:298--311, 1988.

\bibitem{Steel1995}
Mike Steel and Mike Charleston.
\newblock Five surprising properties of parsimoniously colored trees.
\newblock {\em Bulletin of Mathematical Biology}, 57(2):367--375, Mar 1995.

\bibitem{Wilde2023}
Mirko Wilde and Mareike Fischer.
\newblock Defining binary phylogenetic trees using parsimony: New bounds.
\newblock {\em Discrete Applied Mathematics}, 341:257--269, 2023.

\bibitem{Yang2006}
Ziheng Yang.
\newblock {\em Computational molecular evolution.}
\newblock Oxford University Press, 2006.

\end{thebibliography}

\end{document}